\newcommand\eat[1]{}
\journalname{Working Paper}
\newlength{\wordlength}
	\newcommand{\ceil}[1]{\lceil #1 \rceil }
	\newcommand{\PS}{{Multi-unit-eating $PS$}\xspace}
	\newcommand{\ps}{{multi-unit-eating $PS$}\xspace}
\newcommand{\pref}{\succsim\xspace}
\newcommand{\Pref}[1][]{
	\ifthenelse{\equal{#1}{}}{\mathrel \succsim}{\mathop{\succsim_{#1}}}
}                                          
\newcommand{\sPref}[1][]{                  
	\ifthenelse{\equal{#1}{}}{\mathrel \succ}{\mathop{\succ_{#1}}}
}                                          
\newcommand{\Indiff}[1][]{                 
	\ifthenelse{\equal{#1}{}}{\mathrel \sim}{\mathop{\sim_{#1}}}
}
\newcommand{\prefset}[1][]{\ifthenelse{\equal{#1}{}}{\mathcal{\succsim}}{\mathcal{\succsim}_{#1}}}
\newcommand{\ml}[1][]{\ensuremath{\ifthenelse{\equal{#1}{}}{\mathit{ML}}{\mathit{ML}(#1)}}}
\newcommand{\sml}[1][]{\ensuremath{\ifthenelse{\equal{#1}{}}{\mathit{SML}}{\mathit{SML}(#1)}}}
\newcommand{\sd}[1][]{\ensuremath{\ifthenelse{\equal{#1}{}}{\mathit{SD}}{\mathit{SD}(#1)}}}
\newcommand{\rsd}[1][]{\ensuremath{\ifthenelse{\equal{#1}{}}{\mathit{RSD}}{\mathit{RSD}(#1)}}}
\newcommand{\rd}[1][]{\ensuremath{\ifthenelse{\equal{#1}{}}{\mathit{RD}}{\mathit{RD}(#1)}}}
\newcommand{\st}[1][]{\ensuremath{\ifthenelse{\equal{#1}{}}{\mathit{ST}}{\mathit{ST}(#1)}}}
\newcommand{\bd}[1][]{\ensuremath{\ifthenelse{\equal{#1}{}}{\mathit{BD}}{\mathit{BD}(#1)}}}
\newcommand{\pc}[1][]{\ensuremath{\ifthenelse{\equal{#1}{}}{\mathit{PC}}{\mathit{PC}(#1)}}}
\newcommand{\dl}[1][]{\ensuremath{\ifthenelse{\equal{#1}{}}{\mathit{DL}}{\mathit{DL}(#1)}}}
\newcommand{\ul}[1][]{\ensuremath{\ifthenelse{\equal{#1}{}}{\mathit{UL}}{\mathit{UL}(#1)}}}
\newcommand{\set}[1]{\{#1\}}
\newcommand{\midd}{\mathbin{: }}
\let\enumtemp=\enumerate
\def\enumerate{\enumtemp\itemsep 1pt}
\let\itemtemp=\itemize
\def\itemize{\itemtemp\itemsep 1pt}
\newcommand{\Omit}[1]{}
	\newtheorem{observation}{Observation}%
\begin{document}

\title{Random assignment with multi-unit demands}
	% \subtitle{Do you have a subtitle?\\ If so, write it here}

	%\titlerunning{Short form of title}        % if too long for running head

	\author{Haris Aziz}

	%\authorrunning{Short form of author list} % if too long for running head

	\institute{%
	  Haris Aziz \at
	  NICTA and UNSW
	  2033 Sydney, Australia \\
	  Tel.: +61-2-8306\,0490 \\
	  Fax: +61-2-8306\,0405 \\
	  \email{haris.aziz@nicta.com.au}
	}

	\date{Received: date / Accepted: date}
	% The correct dates will be entered by the editor

\maketitle

\begin{abstract}
	We consider the multi-unit random assignment problem in which agents express preferences over objects and objects are allocated to agents randomly based on the preferences.
	%, and there may be more objects than agents. 
	%In randomized settings, agents need to reason about their random allocations. 
The most well-established preference relation to compare random allocations of objects is stochastic dominance ($\sd$) which also leads to corresponding notions of envy-freeness, efficiency, and weak strategyproofness.
We show that there exists no rule that is anonymous, neutral, efficient and weak strategyproof. For single-unit random assignment, we show that there exists no rule that is anonymous, neutral, efficient and weak group-strategyproof.
%One of the results carries over to the setting of randomized voting.
		% We present a general impossibility result for the random assignment setting with multi-unit demands namely that there exists no anonymous, neutral, weak $SD$-strategyproof, and $SD$-efficient rule. Similarly, there exists no anonymous, neutral, weak group$SD$-strategyproof, and $SD$-efficient rule even if the number of objects is not more than the number of agents. The latter result also carries over to the setting of randomized voting. 
		% For the single-unit demand random assignment problem in which each agent is allocated at most one object, the \textit{probabilistic serial mechanism ($PS$)} has been shown to be envy-free, weak strategyproof, and efficient. For multiple-unit demand, $PS$ is not weak strategyproof.In this paper, we study another generalization of $PS$ for multi-unit demand called \textit{multi-unit-eating $PS$} which was defined by Che and Kojima (2010).
We then study a generalization of the $PS$ (probabilistic serial) rule called multi-unit-eating $PS$ and prove that multi-unit-eating $PS$ satisfies envy-freeness, weak strategyproofness, and unanimity. 
\end{abstract}
	
	\keywords{Fair division \and probabilistic serial rule \and strategyproofness \and Pareto optimality\\}

\noindent
\textbf{JEL Classification}: C70 $\cdot$ D61 $\cdot$ D71

\section{Introduction}

In the assignment problem, agents express linear preferences over objects and an object is assigned to each agent keeping in view the agents' preferences. The problem models one of the most fundamental setting in computer science and economics with numerous applications~\citep{Gard73b,Wils77a,Youn95b,Sven94a,Sven99a,BEL10a,ACMM05a}. Depending on the application setting, the objects could be car-park spaces, dormitory rooms, replacement kidneys, school seats, etc.
The assignment problem is also referred to as \emph{house allocation}~\citep{ACMM05a,AbSo99a}. 
If the outcome of the assignment problem is \emph{deterministic} then it can be inherently unfair. Take the example of two agents having identical preferences over two objects. Then any reasonable notion of fairness demands that both agents have equal right to each of the two objects. Since randomization is one of the oldest tools to achieve fairness, we consider the \emph{random assignment problem}~\citep{HyZe79a,Youn95b,BoMo01a,KaSe06a,GuCo10a,BCK11a,BCKM12a} in which objects are allocated randomly to agents according to their preferences.
The outcome is a random assignment which specifies the probability of each object being allocated to each of the agents. In contrast to some of the earlier work on random assignment, we focus on the random assignment problem in which there can be more objects than the number of agents~\citep{Koji09a}.

When agents express ordinal preferences over objects but the outcomes are fractional or randomized allocations, then there is a need to use \emph{lottery extensions} to extend preferences over objects to preferences over random allocations. 
In random settings, the most established preference relation between random allocations is \emph{stochastic dominance ($\sd$)}. $\sd$ requires that one random allocation is preferred to another one if and only if the former first-order stochastically dominates the latter. This relation is especially important because one random allocation stochastically dominates another one  if and only if the former yields at least as much expected utility as the latter for any von-Neumann-Morgenstern (vNM) utility representation consistent with the ordinal preferences~\citep{ABS13a}.
The $\sd$ relation can be used to define corresponding notions of envy-free, efficiency, and strategyproofness~\citep{BoMo01a,KaSe06a}. In this paper, we check
which levels of fairness, efficiency, and strategyproofness can be satisfied simultaneously.

For the random assignment problem without multi-unit demands, the most common and well-known way to assign objects is \emph{random priority ($RP$)} in which a permutation of agents is chosen uniformly at random and agents successively take their most preferred available object~\citep{AbSo98a,BoMo01a,CrMo01a}. Although $RP$ is strategyproof and results in a Pareto optimal assignment, \citet{BoMo01a} in a remarkable paper showed that $RP$ does not satisfy the stronger efficiency notion of stochastic dominance ($\sd$) efficiency and also a fairness concept called $\sd$-envy-freeness.\footnote{Another drawback of $RP$ is that the resultant fractional allocation is \#P-complete to compute~\citep{ABB13b}.} 
Furthermore, they presented an elegant algorithm called \emph{$PS$ (probabilistic serial)} that is not only $\sd$-efficient and $\sd$-envy-free but also satisfies weak $\sd$-strategyproofness. In $PS$, agents `eat' the most favoured available object at the same rate until all the objects are consumed. The fraction of object consumed by an agent is the probability of the agent getting that object.\footnote{By the \emph{Birkhoff-von Neumann theorem}, any fractional assignment can be represented by a convex combination over discrete assignments.}

Since its inception~\citep{BoMo01a}, $PS$ has received considerable attention and has been extended in a number of ways~\citep{KaSe06a,AtSe11a,Yilm09a}. In particular, it can be naturally extended to the more general case with multi-unit demands in which there are $nc$ objects and $c\geq 1$ objects are allocated to each of the agents~\citep{BoMo01a,Heo11a,Koji09a}. The extension does not require any modification to the specification of $PS$:
agents continue eating their most preferred available object until all the objects have been consumed. Although this \emph{one-at-a-time} extension (which we will refer to as $OPS$) still satisfies $\sd$-efficiency and $\sd$-envy-freeness, it is not weak $\sd$-strategyproof~\citep{Koji09a}. Incidentally there is another extension of $PS$ called the \textit{multi-unit-eating probabilistic serial} that was briefly described by \citet{ChKo10a} but has received no attention in the literature. In multi-unit-eating $PS$, each agent tries to eat his $c$ most preferred objects that are still available at a uniform speed until all objects have been consumed. We show that multi-unit-eating $PS$ satisfies desirable properties: it is weak $\sd$-strategyproof, $\sd$-envy-free, and unanimous. 

We point out that the problem of discrete assignment with multi-unit demands has attracted considerable attention~\citep{BoLa11a,Budi11a,EhKl03a,Hatf09a,KNWX13a,BEL10a}. In this paper, we focus on \emph{random} assignments with multi-unit demands. Multi-unit demand is a natural requirement in settings such as course allocation~\citep{Budi11a}. Moreover, we will require that each agents gets equal number of objects~\citep{Hatf09a}. This is a natural requirement in settings such as paper assignment to referees. 
%available objects with speed $1$ at every time point until all the objects have been consumed. 

%Let $rem(t)$ be the number of objects that have not been completely eaten at time $t$. In multi-unit-eating $PS$, each agent eats his $\max(c,rem(t))$ favorite
%available objects with speed $1$ at every time point until all the objects have been consumed. 

%\paragraph{Related work}

Apart from $RP$ and $PS$, two other natural assignment rules are \emph{uniform} and \emph{priority}. In the uniform rule, each agent gets $1/n$ of each object~\citep{Cham04a,Koji09a}. In the priority mechanism, there is a permutation of agents, and each agent in the permutation is assigned the $c$ most preferred available objects. The priority mechanism is also referred to as \emph{serial dictator} in the literature~\citep{Sven94a,Sven99a}.
Whereas uniform does not take into account the preferences of agents and is highly inefficient, priority is highly unfair to the agents at the end of the permutation. In more recent work, \citet{NPV15a} proposed two mechanisms for the random assignment problem that also handle limited complementarities. \citet{Hash13a} presented a generalization of RP for more general settings. 
% \textbf{The introduction and the discussion on related work should focus more on other impossibility results. (for example, Papai 2001, Ehlers and Klaus 2003 and Hatfield 2009).

% \citep{Hatf09a}
% \citep{EhKl03a}
% \citep{Papa01b}
%\citep{Budi11a}
% \citep{NPV15a}
% \citep{Hash13a}

\paragraph{Contributions}

We first  prove that for multi-unit demands, there exists no anonymous, neutral, weak $\sd$-strategyproof and $\sd$-efficient random assignment rule. The statement is somewhat surprising considering that all the four axioms used in the statement are minimal requirements. Incidentally, we have not used $\sd$ envy-freeness that is often used to obtain characterizations or impossibility statements in the literature~\citep{Heo11a,BoMo01a,EhKl03a,Koji09a} and is a very demanding requirement. The result is then extended to random assignment \emph{without} multi-unit demands if 
requiring weak $\sd$ group-strategyproofness instead of weak $\sd$ strategyproofness.  Our second result carries over to the setting randomized voting in which agents express weak orders over alternatives and the outcome is a lottery over the alternatives. %For both the impossibility results, we show that if even one axiom is dropped, the impossibility does not hold. 
% and $(ii)$ randomized voting.
%Our statement carries over to various resource allocation domains with additive utilities.

We then conduct an axiomatic analysis of the \ps. 
It is first highlighted that the definition of \ps in the literature is not entirely correct. A proper definition of \ps is formulated.
We show that for multi-unit demands, in contrast to $OPS$, \ps satisfies weak $\sd$-strategyproofness. We prove that \ps satisfies $\sd$ envy-freeness which is one of the strongest notions of fairness.
On the other hand, \ps does not fare well in terms of efficiency. We prove that \ps does not even satisfy 
ex post efficiency although it does satisfy unanimity. Therefore when we generalize $PS$ for multi-unit demands, $OPS$ is the right extension if the focus is on efficiency. On the other hand \ps is the right extension, if the aim is to maintain weak $\sd$-strategyproofness. The arguments for weak $\sd$-strategyproofness and $\sd$ envy-freeness of $MPS$ \ps also simplify the proofs for $PS$ for single-unit demands in \citep{BoMo01a}.
The study helps clarify the relative merits of different assignment rules for multi-unit demands. The relative merits of prominent random assignment rules are then summarized in Table~\ref{table:summarymulti} in the final section.

\section{Preliminaries}

\paragraph{Random assignment problem}
The model we consider is the random assignment problem which is a triple $(N,O,\pref)$ where $N$ is the set of $n$ agents $\{1,\ldots, n\}$, $O=\{o_1,\ldots, o_m\}$ is the set of objects, and $\pref=(\pref_1,\ldots,\pref_n)$ specifies strict, complete, and transitive preferences $\pref_i$ of agent $i$ over $O$. 
We will assume that $m$ is a multiple of $n$ i.e., $m=nc$ where $c$ is an integer.
We will denote by $\mathcal{R}(O)$ as the set of all complete and transitive relations over the set of objects $O$.

% For all $i\in N$ and $o_j\in O$, a random assignment $p$ specifies $p(i)(o_j)\in \[0,1\]$ as the probability of agent $i$ getting allocated object $o_j$.
% For a random assignment $p$, $\sum_{i\in N}p(i)(o_j)= 1$ for all $j\in \{1,\ldots, n\}$; and $\sum_{o\in O}p(i)(o_j)= c$ for all $i\in N$. %$\sum_{h_j\in H}p_{ij}=\sum_{h\in H}w(h)/n$;
% We denote by $p(i)$ the allocation of agent $i$. 
% 
% 
% The value $p(i)(o_j)$ is the fraction of object $o_j$ that agent $i$ is allocated. Each row $p(i)=(p(i)(o_1),\ldots, p(i)(o_m))$ represents the allocation of agent $i$. 
% The set of columns correspond to the objects $o_1,\ldots, o_m$.
% A feasible random assignment is discrete if $p(i)(o)\in \{0,1\}$ for all $i\in N$ and $o\in O$.

A random assignment $p$ is a $(n\times m)$ matrix $[p(i)(o_j)]_{1\leq i\leq n, 1\leq j\leq m}$ such that for all $i\in N$, and $o_j\in O$, $ p(i)(o_j) \in [0,1]$; $\sum_{i\in N}p(i)(o_j)= 1$ for all $j\in \{1,\ldots, m\}$; and $\sum_{o_j\in O}p(i)(o_j)= c$ for all $i\in N$. %$\sum_{h_j\in H}p_{ij}=\sum_{h\in H}w(h)/n$;
The value $p(i)(o_j)$ represents the probability of object $o_j$ being allocated to  agent $i$. Each row $p(i)=(p(i)(o_1),\ldots, p(i)(o_m))$ represents the allocation of agent $i$. 
The set of columns correspond to probability vectors of the objects $o_1,\ldots, o_m$.
A feasible random assignment is discrete if $p(i)(o)\in \{0,1\}$ for all $i\in N$ and $o\in O$.
A \emph{random assignment rule} specifies for each preferences profile a random assignment.
Two minimal fairness conditions for rules are \emph{anonymity} and \emph{neutrality}. Informally, they require that the rule should not depend on the names of the agents or objects respectively. % A condition that is even weaker than anonymity is that of equal treatment of equals (ETE) which requires that the allocation of any two agents with identical preferences should be the same. We say that a random assignment rule is symmetric if it satisfies ETE.

% The allocations for all the agents can be described by an assignment matrix, with the rows indexing the agents and col- umns indexing the houses; like the endowment matrix, the assignment matrix will be a doubly sub-stochastic matrix.

We define the \emph{SD (stochastic dominance)} relation which is an incomplete relation that extends the preferences of the agents over objects to preferences over random allocations. 
Given two random assignments $p$ and $q$, $p(i) \succsim_i^{\sd} q(i)$ i.e.,  a player $i$ \emph{$\sd$~prefers} allocation $p(i)$ to allocation $q(i)$ if 
	$	\sum_{o_j\in \set{o_k\midd o_k\succsim_i o}}p(i)(o_j) \ge \sum_{o_j\in \set{o_k\midd o_k\succsim_i o}}q{(i)(o_j)} \text{ for all } o\in O.$ Since SD is incomplete, it can be that two allocations $p(i)$ and $q(i)$ are \emph{incomparable}: $p(i) \not\succsim_i^{\sd} q(i)$ and  $q(i) \not\succsim_i^{\sd} p(i)$.

	Next, we define the DL (downward lexicographical) relation which is a complete relation. 
	Let $p(i)$ and $q(i)$ be two random allocations. Let $o\in O$ be the most preferred object such that $p(i)(o)\neq q(i)(o)$. Then,
	$p(i) \succ_i^{DL} q(i) \iff p(i)(o)> q(i)(o).$

\begin{example}\label{example:assignment}
	Consider the random assignment problem for two agents $N=\{1,2\}$ and four objects $O=\{o_1,o_2,o_3,o_4\}$ with the following preferences:
	\begin{align*}
		1: o_1,o_2,o_3,o_4\\
		2: o_2,o_1,o_3,o_4\\
	\end{align*}
Let us assume that agent $1$ gets $o_1$ with probability one, and objects $o_3$ and $o_4$ with probability half. Then the random assignment can be represented by the following matrix.
	\[p=\begin{pmatrix}
			1&0&1/2&1/2\\
		  0&1&1/2&1/2
			\end{pmatrix}.
		\]Note that agent 1's preference is $o_1\mathrel{\succ_1}o_2\mathrel{\succ_1}o_3\mathrel{\succ_1}o_4$. Based on the preferences over objects, one can consider preferences over allocations: $p(1)\succ_1^{\sd} p(2)$ and also $p(1)\succ_1^{DL} p(2).$
	% Let us assume that agent one has the following preferences $o_1\mathrel{\succ_1}o_2\mathrel{\succ_1}o_3\mathrel{\succ_1}o_4$ which can also be represented as $1: o_1,o_2,o_3,o_4$.
	% 	% \begin{align*}
	% 	% 	1: o_1,o_2,o_3,o_4.
	% 	% \end{align*}
	%  Then, $p(1)\succ_1^{\sd} p(2)$ and also $p(1)\succ_1^{DL} p(2).$
\end{example}

\paragraph{Envy-freeness}

	An assignment $p$ satisfies \emph{$\sd$ envy-freeness} if each agent (weakly) $\sd$ prefers his allocation to that of any other agent: $p(i) \pref_i^{\sd} p(j) \text{ for all } i,j\in N.$
	An assignment $p$ satisfies \emph{weak $\sd$ envy-freeness} if no agent strictly $\sd$ prefers someone else's allocation to his: $\neg[p(j)\succ_i^{\sd} p(i)] \text{ for all } i,j\in N.$
	For fairness concepts, $\sd$ envy-freeness implies weak $\sd$-envy-freeness~\citep{BoMo01a}.

\paragraph{Economic efficiency}

	An assignment is \emph{perfect} if each agents gets his most preferred $c$ objects. 
	% A discrete assignment $p$ is \emph{Pareto optimal} if there does not exists another discrete assignment $q$ such that $q(i) \succsim_i^{\sd} p(i)$ for all $i\in N$ and $q(i) \succ_i^{\sd} p(i)$ for some $i\in N$.
	An assignment $p$ is \emph{$\sd$-efficient} is there exists no assignment $q$ such that $q(i) \succsim_i^{\sd} p(i)$ for all $i\in N$ and $q(i) \succ_i^{\sd} p(i)$ for some $i\in N$. An assignment is \emph{ex post efficient} if it can be represented as a probability distribution over the set of $\sd$-efficient discrete assignments. Perfection implies $\sd$-efficiency which implies ex post efficiency.
	
An assignment rule is $\sd$-efficient (ex post efficient) if it always returns an $\sd$-efficient (ex post efficient) assignment. An assignment rule satisfies \emph{unanimity}, if it returns the perfect assignment if a perfect assignment exists.

$\sd$-efficiency implies ex post efficiency which implies unanimity.	The first implication was shown by \citep{BoMo01a}. For the second implication, assume that an assignment does not satisfy unanimity, there exists a perfect assignment $p$ but the mechanism returns some imperfect assignment $q$. The only $\sd$-efficient assignment that gives $c$ units to each agent is $p$. However since $q\ne p$, it cannot be achieved by a probability distribution over $\sd$-efficient discrete assignments.

%A notion weaker than SD-efficiency but stronger than ex post efficiency is that each decomposition is over Pareto optimal assigmments.	

	\paragraph{Strategyproofness}
	
	A random assignment function $f$ is \emph{$\sd$-strategyproof} if 
	$f(\pref)(i)\succsim_i^{\sd} f(\pref_i',\pref_{-i})(i)\text{ for all $\pref_i'$ and $\pref_{-i}$}.$
	A random assignment function $f$ is \emph{weak $\sd$-strategyproof} if 
	$\neg[f(\pref_i',\pref_{-i})(i) \succ_i^{\sd}  f(\pref)(i)]  \text{ for all $\pref_i'\in \mathcal{R}(O)$ and $\pref_i'\in {\mathcal{R}(O)}^{n-1}$}.$
It is easy to see that  $\sd$-strategyproofness implies weak $\sd$-strategyproofness~\citep{BoMo01a}.
A random assignment function $f$ is \emph{weak $\sd$-group-strategyproof} if there never exists an $S\subset N$ and $\pref_S'\in {\mathcal{R}(O)}^{|S|}$ such that 
$f(\pref_{S}',\pref_{-S})(i) \succ_i^{\sd}  f(\pref)(i) \text{ for all } i\in S$ and $\pref_{-S}\in {\mathcal{R}(O)}^{n-|S|}$.

% $\sd$-efficiency implies discrete $\sd$-efficiency implies ex post efficiency which implies unanimity.

% \begin{proposition}
% 	$\sd$-efficiency implies discrete $\sd$-efficiency implies ex post efficiency which implies unanimity.
% \end{proposition}

% For fairness concepts, $\sd$ envy-freeness implies weak $\sd$-envy-freeness~\citep{BoMo01a}.% whereas anonymity is logically incomparable to the envy-freeness notions.  		

% In the \textit{one-at-a time probabilistic serial} mechanism, each agent eats the most preferred available object with speed 1 at every time $t\in [0,k]$.

% 
% If there exists an assignment in which each agents gets his most preferred allocation, then the mechanism returns such an assignment.
% 
% 
% 
% 
% ``Discrete assignment is Pareto optimal'': If the mechanism returns a discrete assignment, it is Pareto optimal. Ex post efficiency implies ``Discrete assignment is Pareto optimal''.

%AGMN+13a

%\section{Related work}

\section{General impossibilities}

For the random assignment problem for which the number of objects is not more than the number of agents, there exists a rule ($PS$) that is anonymous, neutral, $\sd$-efficient and weak $\sd$-strategyproof. However when the number of objects is more than the number of agents, we get the following impossibility (Theorem~\ref{th:impossible}). 

%\footnote{The theorem requires an even weaker property than anonymity called \emph{equal treatment of equals} 
%The impossibility statement concerns a very basic setting and involves undemanding axioms. Hence, it applies to a variety of more complex resource allocation settings.

\begin{theorem}\label{th:impossible}
For the random assignment problem with $c>1$, there exists no anonymous, neutral, $\sd$-efficient, and 
weak $\sd$-strategyproof rule.
\end{theorem}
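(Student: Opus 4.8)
The plan is to exhibit one small instance on which the four axioms clash, using anonymity and neutrality to pin the outcome down completely on two related profiles. First I would take $n=2$, $c=2$ (so $O=\{o_1,o_2,o_3,o_4\}$) and the profile $R$ in which agent $1$ reports the order $o_1,o_2,o_3,o_4$ and agent $2$ reports $o_4,o_2,o_3,o_1$. Swapping agent $1$ with agent $2$ and simultaneously transposing $o_1$ with $o_4$ maps $R$ to itself, so for any rule $f$ meeting the four requirements the assignment $p=f(R)$ must be invariant under this operation: $p(1)(o)=p(2)\bigl((o_1\,o_4)(o)\bigr)$ for every $o$. Together with the feasibility constraints ($p(1)(o)+p(2)(o)=1$ and row sums equal to $2$) this forces $p(1)(o_2)=p(1)(o_3)=1/2$ and $p(1)(o_1)+p(1)(o_4)=1$, leaving a single free parameter $z:=p(1)(o_1)$.

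Second I would remove $z$ using $\sd$-efficiency. If $z<1$, then agent $2$ holds $1-z>0$ of $o_1$ while agent $1$ holds $1-z>0$ of $o_4$; shifting a small amount of $o_1$ from agent $2$ to agent $1$ and the same amount of $o_4$ the other way weakly increases every cumulative sum of agent $1$ (who ranks $o_1$ above $o_4$) and of agent $2$ (who ranks $o_4$ above $o_1$), strictly at $o_1$ for agent $1$ and at $o_4$ for agent $2$, and does not touch the $o_2,o_3$ coordinates. Hence $p$ would not be $\sd$-efficient, so $z=1$ and $f(R)(1)=(1,1/2,1/2,0)$.

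Third, I would let agent $1$ deviate to the report $o_2,o_4,o_1,o_3$, producing the profile $R'$. Now ``swap the two agents and apply $(o_1\,o_3)(o_2\,o_4)$'' is an automorphism of $R'$, so the same two ingredients apply: invariance plus feasibility give $p(1)(o_1)+p(1)(o_3)=1$ and $p(1)(o_2)+p(1)(o_4)=1$, and the transfer argument on each contested pair $\{o_1,o_3\}$ and $\{o_2,o_4\}$ (agent $1$'s report prefers $o_1$ and $o_2$ in these pairs, agent $2$ prefers $o_3$ and $o_4$) forces $f(R')(1)=(1,1,0,0)$. Finally, with respect to agent $1$'s true preference $o_1,o_2,o_3,o_4$, the cumulative sums of $(1,1,0,0)$ are $(1,2,2,2)$ and those of $(1,1/2,1/2,0)$ are $(1,3/2,2,2)$, so $(1,1,0,0)\succ_1^{\sd}(1,1/2,1/2,0)$: agent $1$ strictly $\sd$-gains by misreporting, contradicting weak $\sd$-strategyproofness. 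To pass from $c=2$ to an arbitrary $c>1$ I would append $2(c-2)$ further objects, ranked identically by both agents below $o_1,\dots,o_4$, and extend every object permutation to fix them; anonymity then splits each extra object evenly, the row sum $c$ reduces to $p(1)(o_1)+\cdots+p(1)(o_4)=2$ exactly as before, and the whole argument goes through verbatim; if one wants a bad instance for every $n>2$ as well, one further adds $n-2$ agents each owning a private block of $c$ objects that they rank at the top and everyone else ranks at the bottom, which $\sd$-efficiency awards to them in full.

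The step I expect to be most delicate is verifying that anonymity, neutrality and $\sd$-efficiency really determine $f(R)$ and $f(R')$ \emph{uniquely}, since the argument compares two fully specified allocations: one has to check on each symmetric profile that the only residual freedom is the split of a contested pair, and that the $\sd$-improving transfer that removes it never perturbs the coordinates already fixed by symmetry. This is routine cumulative-sum bookkeeping, but it is where care is required.
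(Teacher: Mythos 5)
Your proof is correct. It rests on the same two pillars as the paper's argument --- (i) on a profile that is invariant under simultaneously swapping the two agents and applying a suitable object permutation, anonymity and neutrality force the corresponding symmetry of the assignment matrix, and (ii) $\sd$-efficiency then resolves each contested pair of objects by an improving bilateral transfer --- but you deploy them differently. The paper keeps the truthful profile $(\pref_1,\pref_2)$ asymmetric and underdetermined, pins down only the two unilateral-deviation profiles, and derives a contradiction by summing the two ``no strict $\sd$-gain'' inequalities against the column-sum constraints $\sum_k (x_{1k}+x_{2k})=1$. You instead choose a truthful profile ($o_1,o_2,o_3,o_4$ vs.\ $o_4,o_2,o_3,o_1$) that is itself symmetric, so that all three axioms pin down \emph{both} $f(R)(1)=(1,\tfrac12,\tfrac12,0)$ and $f(R')(1)=(1,1,0,0)$ exactly, and you then exhibit a single explicit misreport whose outcome strictly $\sd$-dominates the truthful one (cumulative sums $(1,2,2,2)$ vs.\ $(1,\tfrac32,2,2)$). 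What your route buys is a concrete, checkable profitable manipulation and no bookkeeping over an unknown matrix; what the paper's route buys is the slightly stronger statement that at its truthful profile \emph{no} feasible assignment whatsoever is compatible with the two pinned-down deviation outcomes. All of your intermediate claims check out (the two automorphisms are genuine, the transfers on $\{o_1,o_4\}$, $\{o_1,o_3\}$, $\{o_2,o_4\}$ are strict $\sd$-improvements for both agents with respect to the reported profile, which is the relevant one for efficiency of the rule), and your extensions to $c>2$ and $n>2$ are sketched at the same level of detail as the paper's own; the only step there that deserves the same caveat as in the paper is the assertion that $\sd$-efficiency awards each added agent his private block in full, which requires a short trading-cycle argument rather than being immediate.
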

\begin{proof}
	
	We consider a random assignment setting with two agents and four objects with the requirement that each agents gets two units of houses.
\begin{align*}
\pref_1:&\quad a,b,c,d\\
\pref_2:&\quad  b,c,a,d\\
\pref_1':&\quad b,a,c,d\\
\pref_2':&\quad  b,a,c,d
\end{align*}

Let us compute $f(\pref_1,\pref_2')$. %By SD-efficiency, anonymity, and neutrality, we know that.
By anonymity and neutrality of $f$

\[f(\pref_1,\pref_2')=\begin{pmatrix}
		w&x&y&z\\
	  x&w&y&z
		\end{pmatrix}.
	\]

By $\sd$-efficiency of $f$,

\[f(\pref_1,\pref_2')=\begin{pmatrix}
		1&0&y&z\\
	  0&1&y&z
		\end{pmatrix}.
	\]

By anonymity and neutrality of $f$, 

\[f(\pref_1,\pref_2')=\begin{pmatrix}
		1&0&1/2&1/2\\
	  0&1&1/2&1/2
		\end{pmatrix}.
	\]

By using similar arguments, $\sd$-efficiency, anonymity, and neutrality of $f$ implies that

\[f(\pref_1',\pref_2)=\begin{pmatrix}
		1&1/2&0&1/2\\
	  0&1/2&1&1/2
		\end{pmatrix}.
	\]

Now let us consider

\[f(\pref_1,\pref_2)=\begin{pmatrix}
		x_{11}&	x_{12}&	x_{13}&	x_{14}\\
	  	x_{21}&	x_{22}&	x_{23}&	x_{24}
		\end{pmatrix}.
	\]

	For $f(\pref_1,\pref_2)$ to be feasible,
	
\begin{align*}
x_{11},	x_{12},	x_{13},	x_{14},	x_{21},	x_{22},	x_{23},	x_{24}\geq 0\\
x_{11}+	x_{12}+	x_{13}+	x_{14}=2\\
x_{21}+	x_{22}+	x_{23}+	x_{24}=2\\
x_{11}+	x_{21}=x_{12}+	x_{22}= x_{13}+	x_{23}= x_{14}+	x_{24}=1\\
\end{align*}

% 	$f(\pref_1,\pref_2)(2)(b)=1$ because if it were not then DL-SP is not satisfied.
% 	Hence $f(\pref_1,\pref_2)(1)(b)=0$.
% 	$f(\pref_1,\pref_2)(1)(a)=1$ because if it were not then DL-SP is not satisfied.
% 	But now agent 1 can improve his allocation by expressing $\pref_1$ to get $1/2$ of $b$ rather than $0$ of $b$. Hence $f$ cannot be DL-strategyproof.  
% 	
% 	
% Now if we want to avoid weak $\sd$-SP, either allocations are indifferent or there are incomparabilities. Indifference does not do the job so we try incomparabilities.

% The first thing to notice is that since $f(\pref_1,\pref_2)$ is $\sd$-efficient, then either $x_{21}=0$ or $x_{13}=0$. If not, then agent $2$ can give $\epsilon$ fraction of $a$ to agent $1$ in return for  $\epsilon$ fraction of $c$ and both agents benefit. This implies that $f(\pref_1,\pref_2)$ is not $\sd$-efficient.

Next, we show that if $f(\pref_1,\pref_2)= f(\pref_1',\pref_2)$ or $f(\pref_1,\pref_2)= f(\pref_1,\pref_2')$, then $f$ is not weak $\sd$-strategyproof. 

If $f(\pref_1,\pref_2)= f(\pref_1',\pref_2)$, 
then 

\[f(\pref_1,\pref_2')(2)\succ_2^{\sd} f(\pref_1,\pref_2)(2).\]
Hence, $f$ is not weak $\sd$-strategyproof.  

If $f(\pref_1,\pref_2)= f(\pref_1,\pref_2')$,
then 

\[f(\pref_1',\pref_2)(1)\succ_1^{\sd} f(\pref_1,\pref_2)(1).\]
Hence, $f$ is not weak $\sd$-strategyproof.  

% If $f(\pref_1,\pref_2)= f(\pref_1,\pref_2')$, then $f$ is not weak SD-strategyproof. 
Therefore the only way $f$ can still be weak $\sd$-strategyproof if both of the following conditions hold.

\begin{itemize}
\item $f(\pref_1,\pref_2)(1)$ is incomparable for $1$ with 
$f(\pref_1',\pref_2)(1)$.	
\item $f(\pref_1,\pref_2)(2)$ is incomparable for $2$ with 
$f(\pref_1,\pref_2')(2)$.
\end{itemize}

This means that the following constraints should hold.

Given that agent $2$ reports $\pref_2$, agent $1$ should not benefit by misreporting $\pref_1'$ instead of $\pref_1$. This implies that $x_{11}+x_{12}+x_{13}>1.5$.

Given that agent $1$ reports $\pref_1$, agent $2$ should not benefit by misreporting $\pref_2'$ instead of $\pref_2$. This implies that $x_{22}+x_{23}+x_{21}>1.5$.
 
%
% \begin{enumerate}
% \item \label{cond1}
%
% \begin{itemize}
% \item[] $x_{11}+x_{12}>1.5$ or $x_{11}+x_{12}+x_{13}>1.5$
% %\item $x_{11}+x_{12}+x_{13}>1.5$
% \end{itemize}
% and
% \item \label{cond2}
%  \begin{itemize}
% \item[] $1/2>x_{12}$ or $1.5>x_{11}+x_{12}$
% %\item[] $1.5>x_{11}+x_{12}$
% \end{itemize} and
% \item \label{cond3}
% \begin{itemize}
% \item[] $x_{22}+x_{23}>1.5$ or $x_{22}+x_{23}+x_{21}>1.5$
% %\item $x_{22}+x_{23}+x_{21}>1.5$
% \end{itemize}
% and
% \item \label{cond4}
% \begin{itemize}
% \item[] $1>x_{22}$ or $1.5> x_{21}+x_{22}+x_{23}$
% %\item[] $1.5> x_{21}+x_{22}+x_{23}$
% \end{itemize}
% % $f(\pref_1,\pref_2)(1)$ is incomparable for $1$ with
% % $f(\pref_1',\pref_2)(1)$.
% % \item $f(\pref_1,\pref_2)(2)$ is incomparable for $2$ with
% % $f(\pref_1,\pref_2')(2)$.
% \end{enumerate}
%
% Since each of the conditions \ref{cond1}, \ref{cond2}, \ref{cond3}, \ref{cond4} need to be satisfied, it means that the following inequalities hold:
% \begin{align*}
% x_{11}+x_{12}+x_{13}>1.5\\
% x_{22}+x_{23}+x_{21}>1.5
% \end{align*}

Adding both these inequalities yields 
\begin{align*}
x_{11}+x_{12}+x_{13} + x_{22}+x_{23}+x_{21}>3.
\end{align*}

But this is a contradiction since 
$x_{11}+x_{12}+x_{13} + x_{22}+x_{23}+x_{21}= (x_{11}+ x_{21}) + (x_{12}+ x_{22}) + (x_{13}+ x_{23})=3$. Hence if $f$ is $\sd$-efficient, and anonymous, neutral, then it cannot be weak $\sd$-strategyproof.

The same argument can be extended to arbitrary number of agents where each agent requires two objects from among $o_1,\ldots, o_{2n}$. Each new agent $i\in \{3,\ldots, n\}$ most prefers objects $o_{2i-1}, o_{2i}$ and least prefers objects $o_1,o_2,o_3,o_4$. Hence in each $\sd$-efficient assignment each agent $i\in \{3,\ldots, n\}$ is allocated $o_{2i-1}$ and $o_{2i}$ completely. The same arguments for the case of two agents apply to the more general case.
Similarly, the same arguments can also be extended to the case where $c>2$. One can add more objects to end of the preference lists of both agents and each agent gets a uniform fraction of these objects at the end of the preference lists. 
\qed
\end{proof}

Theorem~\ref{th:impossible} complements an earlier impossibility result of \citet{Koji09a} that states there exists no $\sd$-efficient, $\sd$ envy-free, and weak $\sd$-strategyproof random assignment rule for multi-unit demands. 
In Theorem~\ref{th:impossible}, the property of $\sd$ envy-freeness is replaced by anonymity.

The proof above can be extended by cloning agents $1$ and $2$ to prove the following statement for the basic assignment setting with single-unit demand.

\begin{theorem}\label{th:impossible-group}
For the random assignment problem, there exists no anonymous, neutral, $\sd$-efficient, and 
weak $\sd$ group-strategyproofness rule even for equal number of agents and objects.
\end{theorem}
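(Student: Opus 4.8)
The plan is to reuse the two-agent, four-object instance from the proof of Theorem~\ref{th:impossible} but to replace the $c=2$ demand of each agent by a pair of identical single-unit agents, so that a coalition formed by one such pair can reproduce the manipulation that a single two-unit agent performed there. Concretely, I would take four agents $1,2,3,4$ and four objects $a,b,c,d=o_1,o_2,o_3,o_4$, with agents $1,2$ sharing the true preference $\pref_1\colon a,b,c,d$ and agents $3,4$ sharing the true preference $\pref_2\colon b,c,a,d$; the alternative report is $\pref_1'=\pref_2'\colon b,a,c,d$ exactly as in Theorem~\ref{th:impossible}. Write $P=(\pref_1,\pref_1,\pref_2,\pref_2)$ for the true profile, $P^A=(\pref_1',\pref_1',\pref_2,\pref_2)$ for the profile after the coalition $\{1,2\}$ deviates, and $P^B=(\pref_1,\pref_1,\pref_2',\pref_2')$ for the profile after $\{3,4\}$ deviates.

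First I would pin down $f(P^A)$ and $f(P^B)$. Anonymity forces the two agents of each clone-pair to receive identical allocations, so each pair behaves as one agent holding two units; $\sd$-efficiency then routes one object entirely to $\{1,2\}$ and another entirely to $\{3,4\}$ — using precisely the Pareto-improving swap of Theorem~\ref{th:impossible}, now carried out between a clone of type $1$ and a clone of type $2$ — while the agents-and-objects relabeling that interchanges the two pairs together with the two distinguishing objects forces, via the column-sum constraints, the two remaining objects to be split $\tfrac14$ each. The outcome is the $c=2$ matrices of Theorem~\ref{th:impossible} with rows halved: at $P^B$ each of $1,2$ gets $(\tfrac12,0,\tfrac14,\tfrac14)$ and each of $3,4$ gets $(0,\tfrac12,\tfrac14,\tfrac14)$ (coordinates in the order $a,b,c,d$); at $P^A$ each of $1,2$ gets $(\tfrac12,\tfrac14,0,\tfrac14)$ and each of $3,4$ gets $(0,\tfrac14,\tfrac12,\tfrac14)$. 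At the true profile $P$, anonymity and feasibility give each of $1,2$ some allocation $(x_1,x_2,x_3,x_4)$ with $\sum_k x_k=1$ and $x_1\le\tfrac12$, and force each of $3,4$ to receive $(\tfrac12-x_1,\tfrac12-x_2,\tfrac12-x_3,\tfrac12-x_4)$.

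Next I would show that, whatever $(x_1,x_2,x_3,x_4)$ is, one of the two coalitions strictly $\sd$-improves, so $f$ is not weak $\sd$ group-strategyproof. Since both agents of a pair have the same allocation at $P$ and at the deviation, the coalition $\{1,2\}$ strictly gains at $P^A$ iff $(\tfrac12,\tfrac14,0,\tfrac14)\succ_1^{\sd}(x_1,x_2,x_3,x_4)$, which (using $x_1\le\tfrac12$ and $x_3\ge0$) amounts to $x_1+x_2+x_3\le\tfrac34$ together with $(x_1,x_2,x_3,x_4)\ne(\tfrac12,\tfrac14,0,\tfrac14)$; symmetrically $\{3,4\}$ strictly gains at $P^B$ iff $(0,\tfrac12,\tfrac14,\tfrac14)\succ_2^{\sd}(\tfrac12-x_1,\tfrac12-x_2,\tfrac12-x_3,\tfrac12-x_4)$, which amounts to $x_1+x_2+x_3\ge\tfrac34$ together with $(x_1,x_2,x_3,x_4)\ne(\tfrac12,0,\tfrac14,\tfrac14)$. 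If $x_1+x_2+x_3<\tfrac34$ the first holds; if $x_1+x_2+x_3>\tfrac34$ the second holds; and if $x_1+x_2+x_3=\tfrac34$ then, since the two excluded points $(\tfrac12,\tfrac14,0,\tfrac14)$ and $(\tfrac12,0,\tfrac14,\tfrac14)$ are distinct, at least one of the two deviations still applies. Hence no feasible $f(P)$ avoids a profitable coalition. Finally, for $n=m>4$ I would adjoin dummy agents $5,\dots,n$ with agent $i$ ranking $o_i$ first; $\sd$-efficiency then allots $o_i$ to agent $i$ in full and the four-agent argument is unchanged.

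The main obstacle I anticipate is the bookkeeping of the middle paragraph rather than anything conceptual: one has to check carefully that anonymity genuinely pins $f(P^A)$ and $f(P^B)$ down to the halved matrices (it acts both within each clone-pair and between the two pairs under the object relabeling), that each coalition's gain is measured by the clones' \emph{true} preferences $\pref_1,\pref_2$, and that the two boundary exclusion points cannot coincide — once these are in place the argument is just the "sum three column totals" counting of Theorem~\ref{th:impossible}, rescaled by $\tfrac12$.
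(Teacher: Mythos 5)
Your proposal is correct and follows essentially the same route as the paper's proof: the same four-agent cloning of the two-agent, four-object instance from Theorem~\ref{th:impossible}, the same use of anonymity, neutrality and $\sd$-efficiency to pin down the two deviation profiles to the halved matrices, and the same counting over the first three column totals to derive a contradiction. Your explicit treatment of the boundary case $x_1+x_2+x_3=\tfrac34$ via the distinctness of the two excluded allocation vectors is just a reorganization of the paper's preliminary check that $f(P)$ cannot coincide with either deviation outcome.
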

\begin{proof}

	We consider a random assignment setting with four agents and fours objects. There are two agents that are of type $1$ and two agents of type $2$. Let the real preferences of the agents $\{1,2\}$ of type $1$ be $\pref_1$ and let the real preferences of agents $\{3,4\}$ of type $2$ be $\pref_2$.

\begin{align*}
\pref_1:&\quad a,b,c,d\\
\pref_2:&\quad  b,c,a,d\\
\pref_1':&\quad b,a,c,d\\
\pref_2':&\quad  b,a,c,d\\
\end{align*}

Let us compute $f(\pref_1,\pref_1,\pref_2',\pref_2')$. %By SD-efficiency, anonymity, and neutrality, we know that.

By anonymity and neutrality, we know that

\[f(\pref_1,\pref_1,\pref_2',\pref_2')=\begin{pmatrix}
		w/2&x/2&y/2&z/2\\
		w/2&x/2&y/2&z/2\\
	  x/2&w/2&y/2&z/2\\
	  x/2&w/2&y/2&z/2
		\end{pmatrix}.
	\]

By $\sd$-efficiency, we know that

\[f(\pref_1,\pref_1,\pref_2',\pref_2')=\begin{pmatrix}
		1/2&0&y/2&z/2\\
			1/2&0&y/2&z/2\\
	  0&1/2&y/2&z/2\\
			  0&1/2&y/2&z/2
		\end{pmatrix}.
	\]

Due to anonymity and neutrality of $f$, 

\[f(\pref_1,\pref_1,\pref_2',\pref_2')=\begin{pmatrix}
		1/2&0&1/4&1/4\\
		1/2&0&1/4&1/4\\
	  0&1/2&1/4&1/4\\
		 0&1/2&1/4&1/4
		\end{pmatrix}.
	\]

By using similar arguments, $\sd$-efficiency, anonymity, and neutrality of $f$ implies that

\[f(\pref_1',\pref_1',\pref_2,\pref_2)=\begin{pmatrix}
		1/2&1/4&0&1/4\\
		1/2&1/4&0&1/4\\
	  0&1/4&1/2&1/4\\
			  0&1/4&1/2&1/4
		\end{pmatrix}.
	\]

Now let us consider

\[f(\pref_1,\pref_1,\pref_2,\pref_2)=\begin{pmatrix}
		x_{11}/2&	x_{12}/2&	x_{13}/2&	x_{14}/2\\
		x_{11}/2&	x_{12}/2&	x_{13}/2&	x_{14}/2\\
	  	x_{21}/2&	x_{22}/2&	x_{23}/2&	x_{24}/2\\
			x_{21}/2&	x_{22}/2&	x_{23}/2&	x_{24}/2
		\end{pmatrix}.
	\]

	For $f(\pref_1,\pref_1, \pref_2,\pref_2)$ to be feasible,

\begin{align*}
x_{11},	x_{12},	x_{13},	x_{14},	x_{21},	x_{22},	x_{23},	x_{24}\geq 0\\
x_{11}+	x_{12}+	x_{13}+	x_{14}=2\\
x_{21}+	x_{22}+	x_{23}+	x_{24}=2\\
x_{11}+	x_{21}=x_{12}+	x_{22}= x_{13}+	x_{23}= x_{14}+	x_{24}=1\\
\end{align*}

% 	$f(\pref_1,\pref_2)(2)(b)=1$ because if it were not then DL-SP is not satisfied.
% 	Hence $f(\pref_1,\pref_2)(1)(b)=0$.
% 	$f(\pref_1,\pref_2)(1)(a)=1$ because if it were not then DL-SP is not satisfied.
% 	But now agent 1 can improve his allocation by expressing $\pref_1$ to get $1/2$ of $b$ rather than $0$ of $b$. Hence $f$ cannot be DL-strategyproof.  
% 	
% 	
% Now if we want to avoid weak $\sd$-SP, either allocations are indifferent or there are incomparabilities. Indifference does not do the job so we try incomparabilities.

% The first thing to notice is that since $f(\pref_1,\pref_1, \pref_2,\pref_2)$ is $\sd$-efficient, then either $x_{21}=0$ or $x_{13}=0$. If not, then agents $2$ can give $\epsilon$ fraction of $a$ to agent $1$ in return for  $\epsilon$ fraction of $c$ and all agents benefit. This implies that $f(\pref_1,\pref_1, \pref_2,\pref_2)$ is not $\sd$-efficient.

Next, we show that if $f(\pref_1,\pref_1,\pref_2,\pref_2)= f(\pref_1',\pref_1',\pref_2,\pref_2)$ or $f(\pref_1,\pref_1,\pref_2,\pref_2)= f(\pref_1,\pref_1,\pref_2',\pref_2')$, then $f$ is not  weak $\sd$ group-strategyproof. 

If $f(\pref_1,\pref_1,\pref_2,\pref_2)= f(\pref_1',\pref_1',\pref_2,\pref_2)$, 
then 
\[f(\pref_1,\pref_1,\pref_2',\pref_2')(3)\succ_2^{\sd} f(\pref_1,\pref_1,\pref_2,\pref_2)(3).\]
Hence, $f$ is not weak $\sd$ group-strategyproof.  

If $f(\pref_1,\pref_1,\pref_2,\pref_2)= f(\pref_1,\pref_1,\pref_2',\pref_2')$,
then 
\[f(\pref_1',\pref_1',\pref_2,\pref_2)(1)\succ_1^{\sd} f(\pref_1,\pref_1,\pref_2,\pref_2)(1).\]
Hence, $f$ is not weak $\sd$ group-strategyproof.  

% If $f(\pref_1,\pref_2)= f(\pref_1,\pref_2')$, then $f$ is not weak $\sd$-strategyproof. 

Given that agent of type $\pref_2$ report $\pref_2$, then agent of type $\pref_1$ should not benefit by misreporting $\pref_1'$ instead of $\pref_1$. This implies that $x_{11}+x_{12}+x_{13}>1.5$.

Given that agents of type $2$ report $\pref_1$, then agents of type $2$ should not benefit by misreporting $\pref_2'$ instead of $\pref_2$. This implies that $x_{22}+x_{23}+x_{21}>1.5$.

%
% Therefore the only way $f$ can still be weak $\sd$ group-strategyproof if both of the following conditions hold.
%
% \begin{itemize}
% \item $f(\pref_1,\pref_1,\pref_2,\pref_2)(1)$ is incomparable for type $1$ with
% $f(\pref_1',\pref_1',\pref_2\pref_2)(1)$.
% \item $f(\pref_1,\pref_1,\pref_2,\pref_2)(3)$ is incomparable for type $2$ with
% $f(\pref_1,\pref_1,\pref_2',\pref_2')(3)$.
% \end{itemize}
%
% This means that the following constraints should hold.
%
%
% \begin{enumerate}
% \item \label{cond1}
%
% \begin{itemize}
% \item $x_{11}+x_{12}>1.5$ or
% \item $x_{11}+x_{12}+x_{13}>1.5$
% \end{itemize}
% % and
% % \item \label{cond2}
% %  \begin{itemize}
% % \item[] $1>x_{11}$ or
% % \item[] $1.5>x_{11}+x_{12}$
% % \end{itemize}
% and
% \item \label{cond3}
% \begin{itemize}
% \item $x_{22}+x_{23}>1.5$ or
% \item $x_{22}+x_{23}+x_{21}>1.5$
% \end{itemize}
% % and
% % \item \label{cond4}
% % \begin{itemize}
% % \item[] $1>x_{22}$ or
% % \item[] $1.5> x_{21}+x_{22}+x_{23}$.
% % \end{itemize}
% % $f(\pref_1,\pref_2)(1)$ is incomparable for $1$ with
% % $f(\pref_1',\pref_2)(1)$.
% % \item $f(\pref_1,\pref_2)(2)$ is incomparable for $2$ with
% % $f(\pref_1,\pref_2')(2)$.
% \end{enumerate}
%
% %Since each of the conditions \ref{cond1}, \ref{cond2}, \ref{cond3}, \ref{cond4} need to be satisfied,
%
% This means that the following inequalities hold:
%
% \begin{align*}
% x_{11}+x_{12}+x_{13}>1.5\\
% x_{22}+x_{23}+x_{21}>1.5
% \end{align*}

Hence, 
\begin{align*}
x_{11}+x_{12}+x_{13} + x_{22}+x_{23}+x_{21}>3
\end{align*}

But this is a contradiction since 
$x_{11}+x_{12}+x_{13} + x_{22}+x_{23}+x_{21}= (x_{11}+ x_{21}) + (x_{12}+ x_{22}) + (x_{13}+ x_{23})=3$. Hence if $f$ is $\sd$-efficient and anonymous, and  neutral, then it cannot be weak $\sd$ group-strategyproof.

The same argument can be extended to arbitrary number of agents.
 %where each agent requires two objects from among $o_1,\ldots, o_{2n}$. Each new agent $i\in \{2,\ldots, n\}$ most prefers objects $o_{2i-1}, o_{2i}$ and least prefers objects $o_1,o_2,o_3,o_4$. Hence in each $\sd$-efficient assignment each agent $i\in \{2,\ldots, n\}$ is allocated $o_{2i-1}$ and $o_{2i}$ completely. The same arguments for the case of two agents apply to the more general case.
	\qed
\end{proof}

Theorem~\ref{th:impossible-group} (that holds for single-unit demands) complements Theorem 1 in \citep{Koji09a} that only holds for multi-unit demands. 
The assignment problem in which $m=n$ can be viewed as a subdomain of voting in which each alternative is a discrete assignment and preferences of an agent over assignments simply depend on his allocated object~\citep{AzSt14a}. As a corollary of Theorem~\ref{th:impossible-group}, we get that when agents may express indifference, there exists no randomized social choice rule that is anonymous, neutral, $\sd$-efficient, and  weak $\sd$ group-strategyproof. 
This proves a weaker version of the conjecture that there exists no randomized social choice rule that is anonymous, neutral, $\sd$-efficient, and weak $\sd$-strategyproof~\citep{ABBH12a}.

%\paragraph{Independence of the axioms}
%We now show that three axioms in Theorem~\ref{th:impossible} as well as the three axioms in Theorem~\ref{th:impossible-group} are independent. 
We now show that if one of $\sd$-efficiency, anonymity, or weak $\sd$-strategyproofness is dropped, then there exist rules that satisfy the other properties mentioned in the two impossibility theorems respectively even for multi-unit demands.
If $\sd$-efficiency is dropped or is replaced by ex post efficiency, then RP satisfies strategyproofness, anonymity, neutrality and ex post efficiency.
If anonymity is dropped, then the priority mechanism achieves $\sd$-efficiency and group $\sd$-strategyproofness.
If weak $\sd$-strategyproofness is dropped, then OPS satisfies the other properties. It remains open whether neutrality is necessarily required to obtain the two impossibility theorems.\\

\section{\PS}

In this section, we examine the properties satisfied by \ps ($MPS$).
Before we proceed, we will try to get a better understanding of how \ps works. \citet{ChKo10a} defined \ps as the rule in which each agent eats his $c$ most preferred objects at speed $1$ during the time interval $t\in [0,1]$. They assumed that at each point each agent has $c$ objects available for consumption during the running of \ps and hence all the objects are consumed at time $1$. We first show that it may be the case that less than $c$ objects are available for consumption. Consider the illustration of \ps in Figure~\ref{figure:muli-ps}. At time $t=7/8$, only $o_4$ is remaining. Hence the first goal is to decide how to define \ps when agents have less than $c$ objects to eat. We resort to the following definition of \ps. 

\begin{quote}
Let $rem(t)$ be the number of objects that have not been completely eaten at time $t$. In multi-unit-eating $PS$, each agent eats his $\min(c,rem(t))$ most preferred
available objects with speed $1$ at every time point until all the objects have been consumed. 
\end{quote}

% 
% \begin{example}[Illustration of \ps]
% 	% 	\begin{align*}
% 	% 1:&\quad o_1,o_2,o_3,o_4 \\
% 	% 2:&\quad o_3,o_2,o_4,o_1 
% 	% \end{align*}
% 	
% 	
% \end{example}

 \begin{figure}[h!]
 \centering
 	\[	
 			\begin{tikzpicture}[scale=0.30]
 				\centering
 				\draw[-] (0,-.5) -- (0,9);     
 				\draw[-] (-.5,0) -- (22.5,0);

 				% \draw[-,dashed] (20/3,10) -- (20/3,0);
 				% 			\draw[-,dashed] (20/2,10) -- (20/2,0);
 				% 			\draw[-,dashed] (280/18,10) -- (280/18,0);
 				% 			\draw[-,dashed] (620/36,10) -- (620/36,0);
 				\draw[-] (22.5,9) -- (22.5,0);

\draw[-] (0,3) -- (22.5,3);
\draw[-] (0,6) -- (22.5,6);
\draw[-] (20,0) -- (20,9);

\draw[-] (10,0) -- (10,9);

\draw[-] (0,9) -- (22.5,9);

\draw[-] (15,0) -- (15,9);

\draw[-] (17.5,0) -- (17.5,9);

 					% \draw[->] (0,10) -- (20/3,10);
 					% 				\draw[->] (20/3,8) -- (620/36,8);
 					% 					\draw[->] (0,6) -- (20/2,6);
 					% 					\draw[->] (20/3,4) -- (280/18,4);
 					% 					\draw[->] (280/18,2) -- (20,2);

 						%	\draw (2,-.6) node(c){\small $0.1$};
 						%	\draw (6,-.6) node(c){\small $0.3$};
						
										\draw (0,-.6) node(c){\small $0$};
 							\draw (20/2,-.6) node(c){\small $1/2$};
							
										\draw (22.5,-.6) node(c){\small $9/8$};
							
 						% \draw (280/18,-.27) node(c){$14/18$};
 						% 
 						% 						\draw (620/36,-.27) node(c){$29/36$};
 							\draw (20,-.6) node(c){\small$1$};
							
									\draw (17.5,-.6) node(c){\small$7/8$};

\draw (15,-.6) node(c){\small$3/4$};

 			%	\draw(-.7,15) node(z){\small $10$};
 				\draw(-2,6) node(z){\small Agent $1$};
 				\draw(-2,3) node(z){\small Agent $2$};

\draw(5,6.5) node(z){\small $o_1,o_2$};

\draw(5,3.5) node(z){\small $o_3,o_2$};

\draw(12.5,6.5) node(z){\small $o_1,o_3$};

\draw(12.5,3.5) node(z){\small $o_3,o_4$};

\draw(16.25,6.5) node(z){\small $o_1,o_4$};

\draw(16.25,3.5) node(z){\small $o_1,o_4$};

\draw(18.75,6.5) node(z){\small $o_4$};

\draw(18.75,3.5) node(z){\small $o_4$};

\draw(21.25,6.5) node(z){\small $o_4$};

\draw(21.25,3.5) node(z){\small $o_4$};

 				\end{tikzpicture}
 		\]
	\begin{align*}
		1:&\quad o_1,o_2,o_3,o_4 \\
		2:&\quad o_3,o_2,o_4,o_1 
		\end{align*}
			\[p=\begin{pmatrix}
		3/4&1/2&1/4&1/4\\
	   1/4&1/2&3/4&3/4
		\end{pmatrix}.\]

 \caption{Illustration of \ps with agents eating their preferred objects over time. The eventual assignment is $p$.} 
\label{figure:muli-ps}
 \end{figure}

We will use $MPS$ as the abbreviation for \ps.
Our first observation is that even though agent may not necessarily eat $c$ objects at each point, each agent eats the same number of objects.

\begin{observation}
At each time point, each agent is consuming the same number of objects. All the agents stop eating at exactly the same time. 
\end{observation}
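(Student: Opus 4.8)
The plan is to follow each agent's consumption as a function of time. For a time $t$ during the run of $MPS$, let $g_i(t)$ denote the total mass of objects that agent $i$ has eaten by time $t$ (the sum over $o\in O$ of the partial share of $o$ agent $i$ has accumulated), and recall that an agent's final total share is exactly $c$, so each $g_i$ is nondecreasing from $0$ and capped at $c$. The first sentence of the observation I would get essentially for free from the definition: at any time $t$ before termination, by definition exactly $rem(t)$ objects have not been completely eaten, and ``available'' for an agent means precisely ``not completely eaten'' — there is no object-specific notion of unavailability, completion is a global event. Hence every agent has exactly $rem(t)\ge \min(c,rem(t))$ available objects, so every agent is eating exactly $\min(c,rem(t))$ of them; this number does not depend on $i$.

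For the second sentence, I would note that since each of the $\min(c,rem(t))$ objects agent $i$ is eating is consumed by $i$ at speed $1$, agent $i$'s aggregate consumption rate at time $t$ is $\min(c,rem(t))$, which again is the same for all agents who are still active. Thus on the interval where agent $i$ has not yet reached its quota, $g_i$ satisfies $g_i'(t)=\min(c,rem(t))$, a relation whose right-hand side is independent of $i$, with the common initial value $g_i(0)=0$. Setting $t^{*}:=\inf\{t : g_i(t)=c \text{ for some } i\}$, every agent is active on $[0,t^{*})$, so all the functions $g_i$ coincide there, and by continuity they coincide at $t^{*}$ as well; therefore every agent's total hits $c$ at exactly the time $t^{*}$, i.e. all agents stop at once.

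It remains to check that $t^{*}$ is finite, so that the procedure really does terminate with everyone stopping simultaneously (rather than ``never''). Here I would use a mass argument: the total object mass is $nc$, and the mass consumed by time $t$ is $\sum_i g_i(t)=n\,g(t)$ where $g$ is the common value of the $g_i$. For $t<t^{*}$ we have $g(t)<c$, so the consumed mass is strictly below $nc$, so some object remains and $rem(t)\ge 1$, whence $g'(t)=\min(c,rem(t))\ge 1$ on $[0,t^{*})$. Since $g$ is continuous, nondecreasing and bounded above by $c$, it reaches $c$ by time $t=c$, so $t^{*}\le c$.

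The only step that needs care — and the one I would write out most explicitly — is the claim that every agent always has at least $\min(c,rem(t))$ objects available to eat, because it is exactly the point at which one must use the corrected definition of $MPS$ quoted above (each agent eats his $\min(c,rem(t))$ most preferred available objects, and ``available'' is a global, agent-independent property equal in cardinality to $rem(t)$). Once that is pinned down, both sentences are immediate: the per-agent count of eaten objects and the per-agent consumption rate are both the agent-independent quantity $\min(c,rem(t))$, so the consumption profiles $g_i(\cdot)$ are literally the same function for all $i$, giving simultaneous completion.
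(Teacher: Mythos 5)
Your argument is correct. Note that the paper states this observation without any proof at all, so there is no ``official'' argument to compare against; what you have written is a complete justification of a claim the paper treats as self-evident. You correctly isolate the one point that actually needs an argument: availability is a global, agent-independent property (an object is available iff it is not yet fully consumed), so every agent faces the same menu size $rem(t)$ and therefore eats exactly $\min(c,rem(t))$ objects, an agent-independent count. From there the identity $g_i'(t)=\min(c,rem(t))$ with $g_i(0)=0$ forces all cumulative-consumption functions to coincide, and your mass-balance step ($\sum_i g_i(t)=n\,g(t)<nc$ implies $rem(t)\geq 1$) is exactly what rules out an agent stalling before reaching quota $c$ --- it shows the objects run out precisely when every agent simultaneously hits $c$, which is also the moment the paper's Figure~1 example terminates at $t=9/8$. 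The only cosmetic caveat is that $rem(t)$ is piecewise constant, so $g_i'$ should be read as a one-sided derivative at breakpoints, but this does not affect the conclusion.
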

If the number of objects is less than $c$, then we know that only $c'<c$ objects are remaining.
Next, we study properties of \ps. 
The first things to observe is that \ps runs in linear time and results in a unique fractional assignment. We examine various axiomatic properties of \ps. Our main findings are summarized in the following theorem. We will prove these properties in a series of propositions.

\begin{theorem}
	\PS is linear-time, $\sd$ envy-free, weak $\sd$-strategyproof, and unanimous but not ex post efficient. 
	\end{theorem}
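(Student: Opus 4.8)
The plan is to prove the theorem as a conjunction of five separate claims, each handled by its own proposition, matching the paper's announced structure (``We will prove these properties in a series of propositions'').

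\textbf{Linear time and well-definedness.} First I would argue that $MPS$ terminates and produces a unique fractional assignment. By Observation~1, all agents eat the same number of objects at every time point and finish simultaneously; since the total mass to be consumed is $m=nc$ and each agent eats at aggregate speed $\min(c,rem(t))$, the process runs on a finite time interval. The key structural fact is that the set of ``active'' objects (those among some agent's $\min(c,rem(t))$ most preferred available) changes only finitely many times: a change happens only when an object is fully consumed or when an agent's available-object frontier moves, and both events are finite in number. Between consecutive breakpoints the consumption rates are constant, so the whole run decomposes into $O(m)$ phases, each computable in $O(mn)$ time. This gives linear time in the input size and a unique output.

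\textbf{$\sd$ envy-freeness.} The heart of the argument. For any two agents $i,j$ and any object $o$, I want $\sum_{o_k \succsim_i o} p(i)(o_k) \ge \sum_{o_k \succsim_i o} p(j)(o_k)$. Fix $i$ and the ``upper set'' $U = \{o_k : o_k \succsim_i o\}$. The idea is to track, as a function of time $t$, the total amount of objects in $U$ that agent $i$ has eaten versus the total amount agent $j$ has eaten. Since $i$ always eats his $\min(c,rem(t))$ most preferred \emph{available} objects, whenever \emph{any} object of $U$ is still available, agent $i$ is eating objects that are all weakly $\succsim_i$-above everything $j$ eats from $U$ at that instant, or $i$ is already eating only objects in $U$; in either case $i$'s rate of accumulating $U$-mass is at least $j$'s rate of accumulating $U$-mass, as long as $U$ is not exhausted. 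Once every object of $U$ is gone, neither agent accumulates more $U$-mass. A careful version of this rate comparison — organized by the phases from the first step — yields the inequality at $t=1$ (or at termination), which is exactly $\sd$ envy-freeness. The paper's remark that this argument ``also simplifies the proofs for $PS$'' suggests the single-unit bottleneck argument of \citet{BoMo01a} generalizes, but the multi-unit frontier (an agent eating several objects at once, possibly fewer than $c$) is what makes the rate bookkeeping delicate; I expect this to be the main obstacle.

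\textbf{Weak $\sd$-strategyproofness.} I would show that no misreport $\pref_i'$ produces an allocation that strictly $\sd$-dominates the truthful one for $i$. The standard approach: consider agent $i$ running $MPS$ truthfully, and note that at each time $t$, along the true preference order, the mass $i$ has secured in each upper set $\{o_k : o_k \succsim_i o\}$ is as large as it could be given the other agents' behavior — because $i$ greedily eats top available objects, any available high object is being consumed by $i$ at full available rate. If $\pref_i'$ gave $i$ an allocation $q(i)$ with $q(i) \succ_i^{\sd} p(i)$, then for the $\succsim_i$-largest object $o$ where they differ, $q$ puts strictly more mass on $\{o_k \succsim_i o\}$; I would derive a contradiction by showing that the other agents' consumption leaves at most $p(i)$'s worth of that upper set available to $i$ across the run, regardless of $i$'s report. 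Again the phase decomposition and the envy-freeness-style rate argument are the tools.

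\textbf{Unanimity and failure of ex post efficiency.} Unanimity is short: if a perfect assignment exists, every agent's top $c$ objects are pairwise disjoint, so in $MPS$ each agent only ever eats (his own) top $c$ objects, which are never contested, and consumes each fully by time $1$; hence $MPS$ returns the perfect assignment. For the negative part, I would exhibit a small instance (two or three agents, $c=2$) where the unique $MPS$ outcome assigns positive probability to a discrete assignment that is Pareto-dominated, so the fractional outcome cannot be written as a lottery over $\sd$-efficient discrete assignments; the running example in Figure~\ref{figure:muli-ps} or a close variant of it is a natural candidate, and checking ex post inefficiency there is a finite computation. I would close by noting this contrast with $OPS$, as the introduction promises.
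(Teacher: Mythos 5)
Your decomposition into separate claims, the phase/breakpoint analysis for linear time, the upper-set rate comparison for $\sd$ envy-freeness, and the unanimity argument all match the paper's route and are sound. In particular, your key step for envy-freeness — that at every instant both agents eat the same number $k=\min(c,rem(t))$ of objects, agent $i$'s eaten set contains $\min(k,a)$ members of any $\succsim_i$-upper contour set $U$ with $a$ available members (since all available members of $U$ rank above all available non-members for $i$), while $j$'s eaten set contains at most that many — is exactly the content of the paper's induction over breakpoints with characteristic-vector increments.

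The genuine gap is in the strategyproofness step. Your premise that under truthful reporting ``the mass $i$ has secured in each upper set $\{o_k : o_k \succsim_i o\}$ is as large as it could be given the other agents' behavior'' cannot be right: if it held for every upper set of the final allocation it would establish full $\sd$-strategyproofness, which $MPS$ does not satisfy (the paper explicitly records this). The weaker claim you fall back on — that for the $\succsim_i$-maximal object $o^*$ at which $p(i)$ and $q(i)$ differ the misreport cannot strictly gain — is the right target (it is essentially $DL$-strategyproofness, which is how the paper proceeds), but your justification treats ``the other agents' consumption'' as independent of $i$'s report, and it is not: if $i$ delays eating $o^*$, the times at which objects are exhausted shift, and so does what everyone eats afterwards. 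The paper closes exactly this hole with a dedicated lemma — an agent who does not place $o$ among his reported top $c$ objects can obtain at most $(c-1)/c$ of $o$, i.e.\ never all of it — and then argues by cases: either some other agent also eats $o^*$, in which case starting late strictly loses mass of $o^*$, or no one else ever eats $o^*$, in which case $i$ would have to obtain $o^*$ completely, contradicting the lemma. Without that lemma or an equivalent, your sketch does not go through. A second, smaller slip: for ex post inefficiency, exhibiting one decomposition of the $MPS$ outcome that uses a Pareto-dominated discrete assignment proves nothing, since the outcome may admit many decompositions; you must show that \emph{no} convex combination of $\sd$-efficient feasible discrete assignments equals the outcome. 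The paper does this on the profile $o_1,o_2,o_3,o_4$ versus $o_3,o_2,o_4,o_1$ by noting that agent $2$ receives $o_1$ with probability $1/8>0$ although no $\sd$-efficient balanced discrete assignment awards $o_1$ to agent $2$. Your ``finite computation'' remark would catch this, but the argument as you state it is a non sequitur.
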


\subsection{Fairness}

We first show that \ps satisfies all the notions of fairness defined in the preliminaries. 
It is easy to see that \ps is anonymous and neutral. Next we show that \ps is $\sd$ envy-free. For the proof, we use an extra bit of notation.
For each set $S\subseteq O$, let the characteristic vector of $S$ be $\hat{S}=(x_1,\ldots, x_m)$ where $x_i=1$ if $i\in S$ and $x_i=0$ if $i\notin S$. 

\begin{proposition}
\PS is $\sd$ envy-free.
\end{proposition}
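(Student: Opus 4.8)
The plan is to adapt the classical argument of \citet{BoMo01a} for $PS$ to the multi-unit-eating setting, the only new ingredient being careful bookkeeping of how many objects are being eaten at each instant. Fix two agents $i,j\in N$. By the definition of $\sd$ envy-freeness it suffices to show that for every upper contour set $U=\{o_k : o_k\succsim_i o\}$ of agent~$i$ (equivalently, every prefix of $i$'s preference order) we have $p(i)\cdot\hat{U}\ge p(j)\cdot\hat{U}$, where $\hat U$ is the characteristic vector of $U$.

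For $t$ in the run of $MPS$ let $A(t)\subseteq O$ be the set of objects not yet completely eaten and $rem(t)=|A(t)|$, so that at time $t$ every agent eats exactly $\min(c,rem(t))$ objects at unit speed; by the Observation all agents stop eating at the same time, call it $T$. Let $e_i(t)$ be the number of objects of $U$ that agent $i$ is eating at time $t$. Since $p(i)(o_k)=\int_0^T(\text{rate at which }i\text{ eats }o_k\text{ at time }t)\,dt$, summing over $o_k\in U$ gives $p(i)\cdot\hat{U}=\int_0^T e_i(t)\,dt$, and likewise $p(j)\cdot\hat{U}=\int_0^T e_j(t)\,dt$; the integrands are piecewise constant because $A(t)$ changes only at finitely many breakpoints. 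Hence it is enough to prove the pointwise inequality $e_i(t)\ge e_j(t)$ for (almost) every $t$.

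Fix such a $t$ and put $q=\min(c,rem(t))$. Because $U$ is a prefix of $i$'s order, every object of $U\cap A(t)$ is $\succ_i$ every object of $A(t)\setminus U$, so the $q$ most preferred available objects of $i$ consist of $\min(q,|U\cap A(t)|)$ objects of $U$ followed by lower-ranked objects; thus $e_i(t)=\min(q,|U\cap A(t)|)$. On the other hand agent $j$ eats only $q$ objects at time $t$, and at most $|U\cap A(t)|$ of them can lie in $U$ since that is all the supply of $U$ currently available, so $e_j(t)\le\min(q,|U\cap A(t)|)=e_i(t)$. Integrating over $[0,T]$ yields $p(i)\cdot\hat{U}\ge p(j)\cdot\hat{U}$, i.e.\ $p(i)\succsim_i^{\sd}p(j)$, and since $i,j$ were arbitrary, $MPS$ is $\sd$ envy-free.

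The step I expect to require the most care — and the one that genuinely differs from the single-unit proof — is the handling of $rem(t)$: one must invoke the Observation to guarantee that at each instant $i$ and $j$ eat the \emph{same} number $\min(c,rem(t))$ of objects, so that the cap $q$ in the comparison above is common to both agents, and that they finish simultaneously at $T$. Once that structural fact is available, the comparison "$i$ always prioritises $U$, while $j$ cannot consume $U$ faster than $U$ is available" is immediate, and the remaining claims (piecewise constancy of the rate functions, and that integrating an agent's eating rate reproduces its row of $p$) are routine consequences of the definition of $MPS$ and need not be spelled out in detail.
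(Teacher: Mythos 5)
Your proof is correct and follows essentially the same route as the paper: both arguments reduce $\sd$ envy-freeness to the fact that at every instant agent $i$ eats at least as many objects from any upper contour set $U$ of his own preference as any other agent does, the paper packaging this as an induction over breakpoints on the partial assignments while you integrate the piecewise-constant eating rates directly. Your identity $e_i(t)=\min(q,|U\cap A(t)|)$ in fact makes explicit the counting step that the paper's inductive step leaves implicit.
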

\begin{proof}%[Proof Sketch]
	When \ps is run, if at least one of the $c$ most preferred available objects of some agent $i\in N$ is finished, agent $i$ starts eating the next most preferred $c$ available objects. Also note that when an agent cannot consume more units of an object, then \emph{no} agent can consume more units of the object either.
We will refer to such a time-point as a breakpoint. The breakpoints are $t_1,\ldots, t_l$. Let $p^k$ be the partial assignment at breakpoint $t_k$.
We prove by induction over $k$, the number of breakpoints in the algorithm, that for each agent $i\in N$, his partial allocation $p^k(i) \mathrel{\pref_i^{\sd}} p^k(j)$ for all $j\in N$. 

For the base case $k=1$, we know that $p^1(i) \mathrel{\pref_i^{\sd}} p^1(j)$ for all $j\in N$
since each agent $i$ was consuming his most preferred $c$ objects. Now let us assume that $p^k(i) \mathrel{\pref_i^{\sd}} p^k(j)$. We show that $p^{k+1}(i) \mathrel{\pref_i^{\sd}} p^{k+1}(j)$. At time $t^k$, let the number of objects that have not been completely even be $c'\leq c$.
Let us consider the time point $t_k+\delta$ for some arbitrarily small $\delta>0$.
From time point $t_k$ to $t_k+\delta$ each agent $i$ consumes $\delta$ amount of $c'$ most preferred objects of $S\subset O$ for which $\delta$ amount is still available. 
Thus $p^k(i)$ is changed to $p^k(i) + \delta(\hat{S})$. In the meanwhile for each $j$, $p(j)$ is changed to $p^k(j) + \delta(\hat{S'})$ where $S'$ consists of $c'$ most preferred objects for which $\delta$ amount is still available. 
Hence, $p^{k+1}(i) \mathrel{\pref_i^{\sd}} p^{k+1}(j)$ for each $i,j\in N$.
	\qed
\end{proof}
\begin{corollary}\PS is weak $\sd$ envy-free. Moreover, 
for the assignment problem without multi-unit demands, $PS$ is $\sd$ envy-free. 

\end{corollary}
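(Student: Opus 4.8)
Both claims are immediate consequences of the preceding Proposition (that \PS is $\sd$ envy-free) together with facts already recorded in the excerpt. For the first claim, I would simply invoke the implication ``$\sd$ envy-freeness implies weak $\sd$ envy-freeness'' noted in the Envy-freeness paragraph of the preliminaries: since \ps is $\sd$ envy-free, it is a fortiori weak $\sd$ envy-free, with nothing further to check.

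For the second claim, the plan is to observe that for single-unit demands, i.e.\ $c=1$, the rule \ps literally coincides with $PS$, so the $\sd$ envy-freeness of $PS$ is just the $c=1$ instance of the Proposition. The one small thing to verify is that the $\min(c,rem(t))$ clause in the definition of \ps does not change anything when $c=1$: since $m=n$, all $n$ objects are being eaten simultaneously until they all finish at time $1$ (by the Observation, all agents stop at the same time, and each agent consumes exactly one object's worth of mass), so $rem(t)\geq 1$ for every $t<1$ and hence $\min(1,rem(t))=1$ throughout. Thus at every time point each agent eats precisely her single most preferred still-available object, which is exactly the eating procedure of $PS$. Applying the Proposition with $c=1$ then yields that $PS$ is $\sd$ envy-free.

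The only mild obstacle is this reduction step — making sure that ``$c=1$ \ps'' is genuinely $PS$ and not some degenerate variant — but as argued above it follows directly from the feasibility constraints ($m=n$) and the Observation, so no real work is needed. I would keep the proof to these two short observations rather than re-running the inductive breakpoint argument, since that argument was already carried out in full generality in the Proposition.

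\begin{proof}
Since \PS is $\sd$ envy-free (preceding Proposition) and $\sd$ envy-freeness implies weak $\sd$ envy-freeness, \PS is weak $\sd$ envy-free. For the second statement, note that when $c=1$ and $m=n$, all $n$ objects are being consumed until they are all exhausted at time $1$, so $rem(t)\geq 1$ for all $t<1$ and $\min(c,rem(t))=1$ at every time point. Hence each agent always eats her single most preferred available object, so \ps coincides with $PS$. The claim now follows by applying the preceding Proposition with $c=1$.
\qed
\end{proof}
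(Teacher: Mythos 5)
Your proposal is correct and matches the intended argument: the paper presents this as an immediate corollary of the preceding Proposition, using exactly the implication ``$\sd$ envy-freeness implies weak $\sd$ envy-freeness'' from the preliminaries and the observation that \ps with $c=1$ is just $PS$. Your extra check that $\min(c,rem(t))=1$ throughout when $m=n$ is a harmless (and slightly more careful) addition.
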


\subsection{Strategyproofness}

In this subsection, we examine the strategic aspects of \ps. We show that \ps satisfies $DL$-strategyproofness and hence weak $\sd$-strategyproofness. A random assignment function $f$ is $DL$-strategyproof if $f(\pref)(i)\succsim_i^{DL} f(\pref_i',\pref_{-i})(i)$ for all $\pref_i'\in \mathcal{R}(O)$ and $\pref_i'\in {\mathcal{R}(O)}^{n-1}$.

\begin{lemma}
	$DL$-strategyproofness implies weak $\sd$-strategyproofness.
\end{lemma}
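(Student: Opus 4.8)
The plan is to prove the contrapositive: assuming an agent $i$ has a profitable weak-$\sd$ manipulation, I will exhibit a violation of $DL$-strategyproofness. Concretely, suppose $f(\pref_i',\pref_{-i})(i) \succ_i^{\sd} f(\pref)(i)$; I want to conclude $f(\pref_i',\pref_{-i})(i) \succ_i^{DL} f(\pref)(i)$, which contradicts $DL$-strategyproofness of $f$ (since that would require $f(\pref)(i) \succsim_i^{DL} f(\pref_i',\pref_{-i})(i)$, and $DL$ being a complete antisymmetric order on allocations with fixed total mass $c$, the two statements are incompatible).

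The key step is the elementary fact that strict $\sd$-dominance implies strict $DL$-dominance (with respect to the same preference order $\pref_i$). I would argue this directly: write $p = f(\pref_i',\pref_{-i})(i)$ and $q = f(\pref)(i)$, and let $o_1 \pref_i o_2 \pref_i \cdots \pref_i o_m$ be $i$'s preference order. Since $p \succsim_i^{\sd} q$, for every prefix we have $\sum_{k\le j} p(o_k) \ge \sum_{k \le j} q(o_k)$; since $p \ne q$ (strictness of $\succ^{\sd}$ rules out $p=q$), let $j^*$ be the smallest index with $p(o_{j^*}) \ne q(o_{j^*})$. For all $j < j^*$ the prefix sums of $p$ and $q$ agree, so the prefix-sum inequality at $j^*$ reduces to $p(o_{j^*}) \ge q(o_{j^*})$, and since they differ, $p(o_{j^*}) > q(o_{j^*})$. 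By definition of $DL$ this is exactly $p \succ_i^{DL} q$.

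Finally I would assemble the pieces: $DL$-strategyproofness says $f(\pref)(i) \succsim_i^{DL} f(\pref_i',\pref_{-i})(i)$ for all $\pref_i'$; by antisymmetry of $DL$ (a total order) this precludes $f(\pref_i',\pref_{-i})(i) \succ_i^{DL} f(\pref)(i)$; by the step above it therefore precludes $f(\pref_i',\pref_{-i})(i) \succ_i^{\sd} f(\pref)(i)$; hence $f$ is weak $\sd$-strategyproof. I do not anticipate a serious obstacle here — the only mild care needed is making sure the "strict $\sd$ implies $p\ne q$" and the prefix-sum bookkeeping at index $j^*$ are stated cleanly, and noting that both allocations have the same total mass $c$ so that $DL$ is genuinely a total order on the relevant set (this is why the first point of difference determines the comparison and ties cannot occur on the whole vector when the vectors differ).
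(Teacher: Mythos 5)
Your proposal is correct. The paper states this lemma without giving a proof, and your argument is precisely the standard one it implicitly relies on: the only substantive step is that strict $\sd$-dominance forces strict $DL$-dominance (the prefix sums agree before the first coordinate of disagreement, so the $\sd$ inequality at that coordinate reduces to a strict inequality in the $DL$-relevant entry), after which completeness and antisymmetry of $DL$ on allocations of equal total mass finish the contrapositive. Your bookkeeping at the index $j^*$ is clean and there is no gap.
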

% \begin{proof}
% 	For two allocations $p,q$,
% 	\[p(i)\mathrel{\pref_i^{\sd}}q(i) \implies  p(i)\mathrel{\pref_i^{DL}}q(i).\]
% 	Since $DL$ is a refinement of SD, DL-strategyproofness implies weak SD-strategyproofness. 
% \end{proof}

Next we show that \ps is $DL$-strategyproof. The key to our argument is the insight that an agent cannot get an object with probability one if he does not start eating it from time $t=0$. This contrasts sharply with one-at-a-time $PS$ where an agent can still get an object completely even if he delays eating it.

\begin{lemma}\label{lemma:complete}
An agent cannot get an object $o$ completely if he does not express it as one of his most preferred $c$ objects. 
\end{lemma}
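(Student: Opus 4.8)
The plan is to argue by contradiction: suppose agent $i$ does not list $o$ among his $c$ most preferred objects but nonetheless $p(i)(o)=1$, where $p$ is the assignment produced by $MPS$. Put $B=\{o'\in O : o'\succ_i o\}$; the hypothesis gives $|B|\ge c$. The first thing I would establish is a ``no early eating of $o$'' fact: as long as at least $c$ objects of $B$ are still available, we have $rem(t)\ge c$, so $i$ eats his $\min(c,rem(t))=c$ most preferred available objects, and these are drawn entirely from $B$ (there are $\ge c$ available objects all strictly preferred to $o$); hence $i$ consumes nothing of $o$ during that time. Since no unit-mass object can be completely consumed in zero time, there is an initial interval $[0,\sigma)$ with $\sigma>0$ on which this holds, where $\sigma$ is the first moment at which fewer than $c$ objects of $B$ remain.

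Next I would pin down exactly when $i$ eats $o$. For $t\ge\sigma$ the number of available objects of $B$ is $\le c-1$ and non-increasing, so if $o$ is still available at time $t$ then, in $i$'s ranking of the available objects, $o$ comes immediately after the at most $c-1$ remaining objects of $B$; hence $o$ lies among $i$'s top $\min(c,rem(t))$ available objects (whether $rem(t)\ge c$ or $rem(t)<c$, in which case $i$ eats all available objects), and $i$ eats it. Combined with the first observation, $i$ eats $o$ precisely on $[\sigma,\tau)$, where $\tau$ is the time at which $o$ is completely consumed (if $\tau\le\sigma$ then $p(i)(o)=0$, already a contradiction). Therefore $p(i)(o)=\tau-\sigma$, and $p(i)(o)=1$ forces $\tau=\sigma+1>1$. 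Moreover, the total amount of $o$ consumed over $[0,\tau)$ equals $1$ and all of it goes to $i$, so no other agent ever eats $o$.

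The crux is then to convert ``no other agent eats $o$'' into a lower bound on $rem(t)$. Fix $j\ne i$ (such $j$ exists, since when $n=1$ we have $m=c$ and every object is among the top $c$, so the hypothesis is vacuous). For $t\in(\sigma,\tau)$, object $o$ is available but $j$ does not eat it, so every one of the $\min(c,rem(t))$ objects $j$ eats is strictly preferred by $j$ to $o$; together with $o$ itself this yields $rem(t)\ge\min(c,rem(t))+1$, which is impossible unless $rem(t)\ge c+1$. Hence on all of $(\sigma,\tau)$ every one of the $n$ agents eats a full $c$ objects. I would finish with a mass‑conservation count: over $[0,\sigma)$ we have $rem(t)\ge c$, so the total amount consumed there is $nc\,\sigma>0$; over $(\sigma,\tau)$ the total consumed is $nc(\tau-\sigma)=nc$; so strictly more than $nc$ units of object mass are consumed over $[0,\tau)$, contradicting $|O|=nc$.

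I expect the main obstacle to be the middle step: carefully verifying that $i$ eats $o$ without interruption throughout $[\sigma,\tau)$, which requires handling both the regime $rem(t)>c$ (top $c$ available) and the endgame regime $rem(t)\le c$ (every agent eats every remaining object), and the companion point that $rem(t)\le c$ is incompatible with some other agent declining to eat an available $o$. Once these local facts about the $MPS$ eating rule are in place, the conservation argument is routine. (As the paper notes, specializing this to $c=1$ recovers, with a shorter argument, the corresponding fact used for $PS$ in \citep{BoMo01a}.)
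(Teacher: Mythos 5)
Your proof is correct, and it takes a genuinely different route from the paper's. The paper argues locally about agent $i$'s own consumption budget: before $i$ can start eating $o$ he has (allegedly) already eaten at least one unit, so he has at most $c-1$ units left to spread over the $c$ objects he then eats, giving at most $(c-1)/c$ of $o$. You instead deduce from $p(i)(o)=1$ that no other agent ever eats $o$, use the eating rule to conclude $rem(t)\ge c+1$ throughout the interval on which $o$ is being consumed, and close with a mass-conservation count showing that more than $nc$ units would have to be eaten. Your version is the stronger of the two: the paper's intermediate claim that $i$ has eaten at least one unit before reaching $o$ is not justified and can actually fail (e.g.\ with $n=3$, $c=2$, if two other agents share $i$'s top two objects those are exhausted at time $1/3$, by which point $i$ has eaten only $2/3$ of a unit), and its final bound $(c-1)/c$ need not hold either, whereas the lemma itself survives. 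Your key step --- that $j$ declining an available $o$ forces $rem(t)\ge\min(c,rem(t))+1$ and hence $rem(t)\ge c+1$ --- is exactly what makes the argument airtight, and it also yields an even shorter finish if you prefer: since $rem(t)\ge c$ on all of $[0,\sigma+1)$, agent $i$ alone eats at rate $c$ for total time $\sigma+1>1$, exceeding his budget of $c$ units, so the global sum over all agents is not strictly needed. The only cost of your approach is length; what it buys is a proof that actually closes the gaps in the published sketch.
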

\begin{proof}
	Assume that agent $i$ does not report $o$ as one of his most preferred  $c$ objects but gets it completely. Then while $i$ is eating $o$, there must be at least $c$+1 objects that are still not eaten completely and none of the other agents are eating $o$. Before agent $i$ eats $o$, the number of units eaten by $i$ is at least $1$ and less than $c$. If $i$ has already eaten exactly $c$ units, then it will get zero units of $o$. Now for the $c$ objects it starts eating including $o$, it can eat at most $c-1$ units because it has already eaten at least one unit. Therefore, agent $i$ can eat at most $(c-1)/c$ of $o$. 
		% Assume for contradiction that agent $i$ starts eating $o$ at time $t>0$. Then this means that $i$ finished eating $o$ at time $t>1$. But this means that all agents stopped eating at a time later than $1$. If all agents had been eating their $c$ most preferred available objects then the objects would have been finished at time $1$. This means that at some time point before $t=1$, there were less than $c$ available objects. This implies that other agents also consumed some fraction of object $o$. Hence agent $i$ does not get $o$ completely. Hence an agent can only get an object completely if he starts eating it at time $0$.
	\qed
\end{proof}

\begin{proposition}\label{prop:dlsp}
\PS is $DL$-strategyproof.
\end{proposition}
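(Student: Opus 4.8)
The plan is to prove that $MPS$ is $DL$-strategyproof by analyzing the run of the mechanism under the truthful report $\pref_i$ versus a manipulated report $\pref_i'$, keeping the other agents' reports $\pref_{-i}$ fixed. Write $p = MPS(\pref)$ and $q = MPS(\pref_i',\pref_{-i})$. I want to show $p(i) \succsim_i^{DL} q(i)$; equivalently, letting $o$ be the $\pref_i$-most preferred object on which $p(i)$ and $q(i)$ disagree, I must show $p(i)(o) > q(i)(o)$.

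The key structural fact I would exploit is \lemref{lemma:complete}: when $MPS$ is run on the truthful profile, agent $i$ receives each of his top $c$ objects (call them $b_1 \pref_i b_2 \pref_i \cdots \pref_i b_c$) with some positive probability, and moreover he is \emph{eating} these objects starting from time $t=0$. The argument should proceed object by object down agent $i$'s true preference order $o^{(1)} \pref_i o^{(2)} \pref_i \cdots$. The crucial claim is: for any prefix of objects $U_k = \{o^{(1)},\dots,o^{(k)}\}$, the total probability $\sum_{o\in U_k} p(i)(o)$ that truthful $i$ obtains from $U_k$ is at least what any report gives him, i.e.\ $\sum_{o\in U_k} p(i)(o) \ge \sum_{o\in U_k} q(i)(o)$. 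If this holds for all $k$, then either $p(i) = q(i)$, or at the first index $k$ where they differ the cumulative sums up to $k-1$ are equal and the cumulative sum up to $k$ is weakly larger for $p$, hence strictly larger at index $k$ alone, which is exactly $p(i)(o^{(k)}) > q(i)(o^{(k)})$ and gives $p(i)\succ_i^{DL}q(i)$. So the whole proof reduces to establishing this prefix-domination inequality.

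To establish the prefix inequality I would argue as follows. In the truthful run, consider the set $U_k$ of $i$'s top $k$ objects. As long as any object of $U_k$ remains unconsumed, agent $i$ is necessarily eating an object of $U_k$ (since his available top-$\min(c,rem(t))$ objects are drawn from the earliest-unconsumed objects in his order, which all lie in $U_k$ until $U_k$ is exhausted — here one must be slightly careful when $rem(t)<c$, but then even more of $i$'s eating is forced onto the remaining objects, which only helps). Let $\tau$ be the first time all of $U_k$ is fully consumed in the truthful run. Up to time $\tau$, agent $i$ has eaten $\tau$ units total (speed $1$, continuously) and \emph{all} of them are spent inside $U_k$; all objects of $U_k$ are fully gone by $\tau$, so $\sum_{o\in U_k}p(i)(o) = \tau$ exactly, and moreover $\tau \le k$ (since $U_k$ contains only $k$ units of object mass total and $i$ shares them with others). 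Now run $MPS$ on the manipulated profile: whatever $i$ does, the \emph{other} agents $\pref_{-i}$ still try to eat their preferred available objects, so the objects in $U_k$ get consumed by others at least as fast in aggregate up to time $\tau$ as they did in the truthful run — because in the truthful run the others were competing with $i$ for those objects, whereas now $i$ may be diverting effort elsewhere. Hence by time $\tau$ the others have eaten at least (truthful amount others ate from $U_k$ by $\tau$) $= k - \tau$ units from $U_k$ if $U_k$ was a ``tight'' bundle, leaving at most $\tau$ units of $U_k$ for agent $i$ to ever obtain, so $\sum_{o\in U_k} q(i)(o) \le \tau = \sum_{o\in U_k} p(i)(o)$. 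This monotonicity-of-competition step is the heart of the matter.

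The main obstacle I anticipate is making the ``others consume $U_k$ at least as fast under the manipulation'' claim fully rigorous, because under the manipulated report the timeline of breakpoints changes globally and a naive pointwise-in-time comparison of consumption need not hold. The clean way around this is a counting / conservation argument rather than a trajectory comparison: fix the truthful exhaustion time $\tau$ of $U_k$, observe that in \emph{any} run of $MPS$ the total mass of $U_k$ is exactly $|U_k| = k$, that agent $i$ can eat at most $\min(c, rem(t))\le$ at-rate-$1$ and never eats $U_k$-objects after his own consumption of non-$U_k$ matter has begun irreversibly (by the mechanism's priority), and that the other agents' total throughput on $U_k$ is governed only by availability, which is nonincreasing in how much $i$ also wants those objects. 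I would isolate this as a short lemma: ``for every prefix $U$ of agent $i$'s true order, $\sum_{o\in U} MPS(\pref)(i)(o) \ge \sum_{o\in U} MPS(\pref_i',\pref_{-i})(i)(o)$,'' prove it by the $\tau$-argument above together with \lemref{lemma:complete} handling the case $U$ = top $c$ or smaller, and then deduce $DL$-strategyproofness in two lines as described. Finally, \lemref{prop:dlsp} combined with the already-stated lemma that $DL$-strategyproofness implies weak $\sd$-strategyproofness yields the desired strategyproofness of $MPS$.
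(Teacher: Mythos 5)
Your reduction of $DL$-strategyproofness to the prefix inequality ``$\sum_{o\in U_k} p(i)(o) \ge \sum_{o\in U_k} q(i)(o)$ for \emph{every} prefix $U_k$ of the true order'' is where the proof breaks. That family of inequalities, taken over all upper contour sets, is by definition the statement $p(i)\succsim_i^{\sd} q(i)$ for every misreport, i.e.\ full $\sd$-strategyproofness of the rule. But $MPS$ is \emph{not} $\sd$-strategyproof (see Table~\ref{table:summarymulti}; for $c=1$ the rule coincides with $PS$, which \citet{BoMo01a} already showed fails $\sd$-strategyproofness). So your key lemma is false: there exist profiles and misreports for which the truthful and manipulated allocations are $\sd$-incomparable, meaning the manipulation does strictly better on \emph{some} prefix $U_k$. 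No counting or conservation argument can rescue the ``monotonicity of competition'' step, because the conclusion it is meant to deliver does not hold. $DL$-strategyproofness is genuinely weaker: it only requires that at the \emph{first} object (in the true order) where $p(i)$ and $q(i)$ differ, the truthful report does strictly better; lower prefixes are allowed to favor the manipulation. This is why the paper's proof is local rather than cumulative: it identifies the most preferred object $o$ at which the eating behavior first diverges, argues that the shares of all objects strictly preferred to $o$ are unchanged, and then shows $i$ gets strictly less of $o$ under the misreport, using Lemma~\ref{lemma:complete} to exclude the possibility that $i$ recovers $o$ completely by eating it later.

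A secondary but real error: your accounting ``by time $\tau$ agent $i$ has eaten $\tau$ units, all inside $U_k$'' is wrong for multi-unit eating. In $MPS$ an agent eats $\min(c,rem(t))$ objects simultaneously, each at speed $1$, so his aggregate consumption rate is up to $c$, and when $k<c$ he is eating objects outside $U_k$ at the same time as objects inside it. So even the quantities $\sum_{o\in U_k}p(i)(o)=\tau$ and $\tau\le k$ do not hold as stated. If you want to salvage the cumulative style of argument, you would have to restrict it to the single prefix at which the allocations first differ — at which point you are essentially reconstructing the paper's local argument.
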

\begin{proof}%[Proof Sketch]
	
We show that for each agent $i\in N$,
$\text{$MPS$}(N,O, (\pref_i,\pref_{-i}))(i) \pref_i^{DL} \text{$MPS$}(N,O, (\pref_i',\pref_{-i}))(i)$ for all other preferences $\pref_i' \in \mathcal{R}(O)$ and $\pref_{-i}\in {\mathcal{R}(O)}^{n-1}$.
If agent $i$ misreports but eats the same objects at each time point, then $i$ gets exactly the same allocation. Therefore, it is sufficient to show that $i$ gets a less preferred allocation with respect to $DL$ if he does not eat the most preferred available objects at each point. Consider the untruthful report $\succ_i'$ under which at some breakpoint $t$, agent $i$ eats a different set of $\min(c,rem(t))$ objects than when he reports $\succ_i$. Consider  the most preferred object $o$ that $i$ started eating at time $t$ when he was reports $\succ_i$ but does not eat when he reports $\succ_i$. This means that for all $o'\succ_i o$, agent $i$ gets exactly the same units of $o'$ when he reports $\succ_i$ or when he reports $\succ_i'$. 
Since $i$ does not eat $o$ at time $t$ when he reports $\succ_i'$, he eats it at a time later than $t$.
We can assume that $rem(t)>c$ or else agent $i$ will eat the same objects after time $t$ whether he reports $\succ_i$ or $\succ_i'$. We show that $i$ gets strictly less fraction of $o$ when he reports $\succ_i'$. 
We distinguish between two cases: $(1)$ when $i$ eats $o$ when he reports $\succ_i'$, there is at least one other agent $j$ that also eats $o$ at some point.
$(2)$ when $i$ eats $o$ when he reports $\succ_i'$, there is at least one other agent $j$ that also eats $o$ at some point.
In case of $(1)$, $o'$ is in demand and $i$ could have eaten a bigger portion of $o$ had he started eating it earlier such as time $t$. 
In case of $(2)$, no agent started eating $o$ at \emph{any} time point when $i$ reports $\succ_i'$. This implies that $i$ gets $o$ completely. But this is a contradiction because we proved in Lemma~\ref{lemma:complete} that if an agent does not start eating an object at time $0$, then he cannot eat it completely. \qed
\end{proof}

The proposition implies that \PS is weak $\sd$-SP.
As a corollary we also get that for $m=n$, the original $PS$  is weak $\sd$-strategyproof. Our proof simplifies the argument in \citep[Step 2, Proposition 1, ][]{BoMo01a}.

Note that Proposition~\ref{prop:dlsp} crucially depends on the fact that in MPS, each agent tries to eat his $c$ most preferred objects. If each agent eats $c-1$ most preferred objects, then we already know from \citep{Koji09a}, that the rule is then not even weak $\sd$-strategyproof. We note that in contrast to \PS, $OPS$ is not DL-strategyproof and in fact there exists a polynomial-time algorithm for computing a DL best response~\citep{AGM+15c}.

% \begin{proposition}\label{prop:not-sdsp}
% \PS is not $\sd$-strategyproof.
% \end{proposition}
% \begin{proof}
% 
% \end{proof}
% 
% 
% 
% \begin{proposition}\label{prop:not-sdsp}
% For the case of two agents, \PS is strategyproof.
% \end{proposition}
% \begin{proof}
% 
% \end{proof}

%
%
%
%The only way an agent can get a DL improvement is if $i$ gets a highly preferred object with probability one, misreports and delays eating that object and still gets that object with probability one. However, $i$, eats his most preferred $c$ objects at speed one during time period $[0,1]$. Hence if an agent delays eating an object, then it can never eat it completely. 

% PS does not $\sd$-dominates the $RP$ output~\citep{BoMo01a}.
% We saw that as the number of agents is more than the number of agents, one cannot satisfy weak $\sd$-strategyproofness and $\sd$-efficiency if one insists on $\sd$-envy-freeness. 

% One can check whether a random assignment is ex post efficient by using Birkhoff's algorithm. Every time a permutation matrix is used we restrict ourselves to Pareto optimal discrete assignments.

% to check 
% whether a random assignment can be represented by a convex combination of discrete assignments

% \begin{proposition}
% 	
% \end{proposition}

\subsection{Efficiency}

We now consider efficiency of \ps. We first observe that \ps satisfies unanimity.

\begin{proposition}
\PS satisfies unanimity.
\end{proposition}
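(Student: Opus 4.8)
The plan is to argue directly from the definition of \ps. Suppose a perfect assignment exists, i.e., there is a way to give each agent his $c$ most preferred objects simultaneously; equivalently, the multiset of "top-$c$ bundles" $T_i = \{$the $c$ most preferred objects of agent $i\}$, taken over all agents, uses each object $o\in O$ at most once in aggregate. I first want to observe that this is exactly the condition that the sets $T_1,\dots,T_n$ are pairwise disjoint (since $\sum_i |T_i| = nc = m = |O|$, being pairwise disjoint is the same as their union being all of $O$, each object used exactly once).

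Next I would trace the execution of \ps under this hypothesis. At time $t=0$ each agent $i$ begins eating his $\min(c,rem(0))$ most preferred available objects; since $rem(0)=m\ge c$, that is all of $T_i$. Because the $T_i$ are pairwise disjoint, no two agents ever compete for the same object: agent $i$ eats the $c$ objects of $T_i$, each at speed $1$, and no one else touches them. Hence after time $1$, agent $i$ has consumed exactly one full unit of each object in $T_i$ and nothing else. I should check that no breakpoint occurs before time $1$: an object in $T_i$ is eaten only by agent $i$ at total speed $1$, so it is exhausted precisely at $t=1$, and all objects are in some $T_i$, so $rem(t)=m$ for all $t<1$ and the process runs cleanly until $t=1$, at which point everything is consumed at once (consistent with the Observation that all agents stop simultaneously). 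Therefore $MPS(N,O,\pref)(i) = \hat{T_i}$ for every $i$, which is the perfect assignment, establishing unanimity.

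The only mildly delicate point — the step I'd expect to need the most care — is the equivalence "a perfect assignment exists $\iff$ the top-$c$ bundles $T_i$ are pairwise disjoint," and the induced claim that when this holds the \ps dynamics never create a breakpoint before $t=1$ and hence never force an agent off his preferred bundle. Both directions of the equivalence are short counting arguments ($|O|=nc$), but they must be stated cleanly. Everything after that is immediate from the speed-$1$, no-contention observation, so no real calculation is required.
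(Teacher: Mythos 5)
Your proposal is correct and follows essentially the same route as the paper's own proof: a perfect assignment forces the agents' top-$c$ bundles to be pairwise disjoint, and under disjointness the \ps eating process has no contention, so each agent consumes exactly his top $c$ objects. You simply spell out in more detail the counting equivalence and the absence of breakpoints before $t=1$, which the paper leaves implicit.
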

\begin{proof}
A preference profile admits a perfect assignment only if each agent can get his most preferred $c$ objects. This implies that for any two agents, their sets of $c$ most preferred objects don't intersect. Given this condition, \ps will assign each agent with his most preferred $c$ objects.
	\qed
\end{proof}

Although unanimity is a very undemanding efficiency property, not all assignment rules satisfy unanimity. For example, the uniform rule does not satisfy it. Even if \ps is modified slightly so that agents eat their $c+1$ most preferred objects at the same rate, then the modified rule would not satisfy unanimity. We also note that the allocation of each agent via \ps is $\sd$-preferred over the uniform allocation.

% 
% \begin{lemma}
% 	Discrete $\sd$-efficiency implies ex post efficiency.
% \end{lemma}
% \begin{proof}
% 	Assume $p$ is not ex post efficient. Then  
% \end{proof}

% \begin{conjecture}
% 	A random assignment is ex post efficient  if and only if it is not $\sd$-dominated by a Pareto optimal discrete assignment.
% \end{conjecture}
% \begin{proof}
% 	If $p$ is not $\sd$-dominated by a discrete assignment 
% \end{proof}

\begin{proposition}
For each agent $i\in N$, $i$ $\sd$-prefers his allocation returned by 	
	\ps to the uniform allocation.
\end{proposition}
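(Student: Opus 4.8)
The plan is to compare, object by object along agent $i$'s preference order, the cumulative amount that $i$ eats under $MPS$ against the cumulative amount $c/m = 1/n$ per object that $i$ receives under the uniform rule. Recall that $\sd$-preference of $p(i)$ over the uniform allocation means exactly that for every object $o$, $\sum_{o_j \succsim_i o} p(i)(o_j) \ge |\{o_j : o_j \succsim_i o\}| \cdot \tfrac1n$. So if we order $O$ as $o_{(1)} \succ_i o_{(2)} \succ_i \cdots \succ_i o_{(m)}$ according to $\pref_i$, it suffices to show that for every $k \in \{1,\ldots,m\}$,
\[
\sum_{r=1}^{k} p(i)(o_{(r)}) \;\ge\; \frac{k}{n}.
\]

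First I would set up the eating-process bookkeeping. Let $T$ be the common finishing time of $MPS$; by the Observation, all agents eat the same number of objects at every instant and stop simultaneously, and since agent $i$ eats at speed $1$ across (up to) $c$ objects until time $T$, the total mass $i$ consumes is $c$, so $T \ge 1$ (in fact $T = m/(\text{number eating at each instant})$, but we only need $T \ge 1$; more precisely the uniform-rate structure gives $nT \cdot (\text{objects eaten per agent per instant})$ summing to $m$, hence $T \ge 1$ with equality iff every agent always has $c$ objects to eat). The key structural fact is the one already used in the envy-freeness proof: whenever agent $i$ is not eating object $o_{(r)}$ at some time $t < T$ even though $r \le $ (his current eating-count), it must be because $o_{(r)}$ is already exhausted — i.e., $i$ always eats a most-preferred available bundle. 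Consequently, for any prefix $\{o_{(1)},\ldots,o_{(k)}\}$, as long as some object in this prefix is still available, agent $i$ is eating at full speed $1$ within this prefix.

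Now I would argue the prefix inequality. Fix $k$ and let $\tau_k \le T$ be the time at which the last object among $o_{(1)},\ldots,o_{(k)}$ is finished (if $k$ is large this may be $\tau_k = T$). On the interval $[0,\tau_k]$ agent $i$ eats at speed $1$ entirely inside the prefix (by the structural fact above: none of the prefix objects is exhausted before $\tau_k$ except possibly at the very moment one finishes, which has measure zero, and $i$ prefers all of them to everything outside the prefix), so $\sum_{r \le k} p(i)(o_{(r)}) \ge \tau_k$. On the other hand, each object $o_{(r)}$ with $r \le k$ is fully consumed by time $\tau_k$ and is eaten by all agents collectively at total speed at most $n$ (each of the $n$ agents eats it at speed at most $1$), so it takes at least time $1/n$ of collective eating; since the $k$ prefix objects must all be finished within $[0,\tau_k]$ and total eating speed across all agents is at most $n$ at every instant (really: at most $n$ total across the whole instance, but restricted to prefix objects it is at most $n$), we get $n \tau_k \ge k$, i.e. $\tau_k \ge k/n$. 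Combining, $\sum_{r\le k} p(i)(o_{(r)}) \ge \tau_k \ge k/n$, which is the desired $\sd$-dominance; and for $k=m$ both sides equal $c$, consistent with feasibility.

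The main obstacle is making the second inequality ($n\tau_k \ge k$) fully rigorous: one must be careful that the $k$ prefix objects could in principle be partially eaten by agents other than $i$ and one needs that the \emph{total} rate at which mass is removed from the prefix is at most $n$ at every time, combined with the fact that all of them (total mass $k$) are gone by $\tau_k$ — this is a clean integration argument but needs the "each agent eats at speed exactly $1$ summed over its current bundle, and at most $1$ on any single object" invariant stated precisely. I would also double-check the boundary behaviour when $k$ exceeds the number of objects $i$ ever touches, but in that regime $\tau_k = T \ge 1 \ge k/n$ fails only if $k > n$, which cannot happen since $k \le m = nc$... so actually for $k$ between $n$ and $m$ one should lean on the same $\tau_k \ge k/n$ bound rather than on $\tau_k \ge 1$; the argument above already delivers it uniformly, so no separate case is needed.
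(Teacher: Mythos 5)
Your reduction to prefix sums is right: $\sd$-dominance over the uniform allocation is exactly $\sum_{r\le k} p(i)(o_{(r)}) \ge k/n$ for every $k$. But the chain $\sum_{r\le k} p(i)(o_{(r)}) \ge \tau_k \ge k/n$ breaks at the second link, and the step you flagged as "the main obstacle" is not merely delicate --- it is false. Take $n=2$, $c=2$, and both agents with preference $o_1 \succ o_2 \succ o_3 \succ o_4$. Under \ps both agents eat $\{o_1,o_2\}$ on $[0,1/2]$, so for $k=2$ we have $\tau_2 = 1/2 < 1 = k/n$. The source of the error is the assertion that the total eating speed is at most $n$: in \ps each agent eats $\min(c,rem(t))$ objects at speed $1$ each, so the aggregate speed is $n\cdot\min(c,rem(t))$, which can be as large as $nc$; a prefix of $k$ objects can therefore vanish well before time $k/n$. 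Symmetrically, your first inequality undercounts: agent $i$ eats the prefix at rate equal to the number of available prefix objects capped at his bundle size, which can exceed $1$. In the example these two errors cancel ($i$ eats the prefix at rate $2$ for time $1/2$), but your argument as written does not establish that they cancel in general.

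The repair is to compare rates agent-by-agent instead of passing through $\tau_k$. At any time $t$ before the prefix is exhausted, let $v$ be the number of available prefix objects; agent $i$'s bundle contains exactly $\min(v,c,rem(t))$ prefix objects (they are his most preferred available objects), while any other agent's bundle contains at most that many. Hence $i$'s instantaneous prefix-consumption rate is always at least a $1/n$ fraction of the aggregate prefix-consumption rate, and integrating over $[0,\tau_k]$ gives $\sum_{r\le k} p(i)(o_{(r)}) \ge \frac{1}{n}\cdot k$. Note that this is precisely the mechanism behind the paper's $\sd$ envy-freeness proof, and indeed the proposition is an immediate corollary of that result (which is presumably why the paper states it without proof): summing $p(i) \mathrel{\pref_i^{\sd}} p(j)$ over all $j\in N$ and using that each column of $p$ sums to $1$ yields $n\sum_{o_j \succsim_i o} p(i)(o_j) \ge \sum_{o_j\succsim_i o} 1 = k$. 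Either route is fine; the one you chose is not.
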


Informally, an agent gets his worst possible assignment if all the other agents have the same preferences. Even in this case, each agent gets a uniform allocation. Although, \ps satisfies unanimity, an assignment returned by \ps can be represented as a convex combination of Pareto dominated discrete assignments.

% \begin{proof}
% 
% \end{proof}

\begin{proposition}
There exists a preference profile for which the outcome of \ps can be represented as a probability distribution over Pareto dominated discrete assignments.  
\end{proposition}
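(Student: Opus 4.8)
### Proof strategy for the final proposition

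The plan is to exhibit a single small instance where \PS produces a fractional assignment whose every decomposition into discrete assignments must use at least one Pareto-dominated (\ie not $\sd$-efficient) discrete assignment. The natural candidate is exactly the instance already drawn in Figure~\ref{figure:muli-ps}: two agents, four objects, $c=2$, with $\pref_1: o_1,o_2,o_3,o_4$ and $\pref_2: o_3,o_2,o_4,o_1$, for which \PS returns
\[
p=\begin{pmatrix}
3/4 & 1/2 & 1/4 & 1/4\\
1/4 & 1/2 & 3/4 & 3/4
\end{pmatrix}.
\]
First I would recall that a random assignment with $c=2$ for two agents is a convex combination of discrete assignments, where each discrete assignment hands each agent a $2$-subset of $\{o_1,o_2,o_3,o_4\}$ and the two subsets partition $O$. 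So I need to enumerate the possible discrete $2$-$2$ partitions: $\{o_1,o_2\}\mid\{o_3,o_4\}$, $\{o_1,o_3\}\mid\{o_2,o_4\}$, $\{o_1,o_4\}\mid\{o_2,o_3\}$, and their three complements (swapping who gets which block), i.e.\ six labelled discrete assignments in total.

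Next I would identify which of these six are $\sd$-efficient. The \emph{perfect} block for agent $1$ is $\{o_1,o_2\}$ and for agent $2$ is $\{o_3,o_2\}$; these overlap in $o_2$, so no perfect assignment exists and ``perfect'' is vacuous here. I would check $\sd$-efficiency directly: the assignment giving agent~$1$ the block $\{o_1,o_2\}$ and agent~$2$ the block $\{o_3,o_4\}$ is $\sd$-efficient (each agent gets a top object; no Pareto-improving trade exists since a trade only helps one of them by hurting the other). By contrast, an assignment giving agent~$1$ $\{o_3,o_4\}$ and agent~$2$ $\{o_1,o_2\}$ is Pareto-dominated by the swapped one, hence not $\sd$-efficient. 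The crux of the argument: I claim that \emph{every} convex decomposition of $p$ must place positive weight on a Pareto-dominated discrete assignment. To see why, look at the column for $o_3$: agent~$1$ receives $o_3$ with probability $1/4>0$, so some discrete assignment in the decomposition gives $o_3$ to agent~$1$. In any such discrete assignment agent~$1$'s block contains $o_3$; agent~$2$'s block is then a $2$-subset of $\{o_1,o_2,o_4\}$. I would then argue that whatever that completion is, the resulting discrete assignment is Pareto-dominated — e.g.\ if agent~$1$ gets $\{o_1,o_3\}$ and agent~$2$ gets $\{o_2,o_4\}$, swapping $o_3$ (agent~1's worst-but-one of that block) is a Pareto improvement? — and here is where I would have to be careful, because $o_3$ is agent~$2$'s \emph{favourite}, so handing $o_3$ to agent~$1$ while agent~$2$ holds $o_1$ (agent~$2$'s worst) is manifestly improvable: give $o_3$ to agent~$2$ and $o_1$ to agent~$1$. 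One checks in each of the (at most three) completions that such a beneficial swap exists, so the discrete assignment is not $\sd$-efficient.

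The main obstacle is the case analysis in the last step: I must verify that for \emph{every} discrete assignment appearing with positive weight and giving $o_3$ to agent~$1$, the assignment is genuinely Pareto-dominated by another discrete assignment over the same object set — and do so without a Pareto-improving move creating a cycle that fails to terminate. Concretely I would: (i) note agent~$1$'s block then contains $o_3$ and at least one of $o_1,o_2$ strictly above $o_3$; (ii) note agent~$2$'s block omits $o_3$, her top object; (iii) produce the explicit swap moving $o_3$ into agent~$2$'s block in exchange for agent~$1$'s \emph{worst} held object among $\{o_1,o_2,o_4\}$, and check it weakly helps agent~$1$ (she trades a worse-or-equal object for $o_3$, which she ranks $3$rd — so this needs the object she gives up to be $o_4$, agent~1's bottom; I would instead present the swap as: whichever of the two agents can be strictly improved by reallocating $o_3$ to agent~$2$). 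Cleanly, since in any decomposition $o_3$ goes to agent~$1$ with positive probability and agent~$2$ ranks $o_3$ first while agent~$1$ ranks it third, the discrete assignment restricted to that realization is dominated by the one that gives $o_3$ to agent~$2$ and returns to agent~$1$ agent~$1$'s most preferred object currently held by agent~$2$; verifying this yields a strict $\sd$-improvement for at least one agent and weak for the other, completing the proof. I would close by remarking this simultaneously shows \PS is not ex post efficient, as asserted in the summary theorem.
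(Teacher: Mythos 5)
Your proposal targets the wrong statement, and the instance you chose provably cannot establish the right one. The proposition asks for a profile whose \ps outcome admits \emph{some} decomposition in which \emph{every} positively weighted discrete assignment is Pareto dominated; what you set out to prove is that \emph{every} decomposition contains \emph{at least one} dominated component. The latter is the content of the subsequent proposition (failure of ex post efficiency) and does not imply the former. Your case analysis showing that any balanced discrete assignment giving $o_3$ to agent $1$ is Pareto dominated is correct and would be useful for that later result, but it leaves the present claim unproved.

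Independently, the instance itself fails. The matrix you copy from Figure~\ref{figure:muli-ps} is not even feasible (its first row sums to $7/4$ rather than $c=2$); the actual \ps outcome for that profile is $p=\bigl(\begin{smallmatrix}7/8&1/2&1/4&3/8\\1/8&1/2&3/4&5/8\end{smallmatrix}\bigr)$, the matrix used later in the ex post efficiency proof. This $p$ cannot be written as a mixture of only Pareto-dominated discrete assignments: agent $1$ receives $o_1$ with probability $7/8$ but $o_3$ with probability only $1/4$, so any decomposition must place weight at least $7/8-1/4=5/8$ on assignments giving agent $1$ the block $\{o_1,o_2\}$ or $\{o_1,o_4\}$, and both of these (with their complements going to agent $2$) are $\sd$-efficient, since in the first agent $1$ holds his two best objects and in the second agent $2$ holds hers. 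The paper instead takes agent $2$'s preference to be $o_2,o_1,o_4,o_3$, for which \ps returns the uniform assignment with every entry $1/2$; this decomposes as weight $\tfrac12$ on agent $1$ getting $\{o_1,o_4\}$ and weight $\tfrac12$ on agent $1$ getting $\{o_2,o_3\}$, and each of these two discrete assignments is Pareto dominated (swap $o_3\leftrightarrow o_4$ in the first, $o_1\leftrightarrow o_2$ in the second, making both agents strictly better off). To repair your proof you would need to switch to such a profile and exhibit an all-dominated decomposition explicitly.
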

\begin{proof}
Consider two agents having the following preferences.
\begin{align*}
1:&\quad o_1,o_2,o_3,o_4 \\
2:&\quad o_2,o_1,o_4,o_3 
\end{align*}
The random assignment as a result of \ps is 
\[\begin{pmatrix}
	1/2&1/2&1/2&1/2\\
   1/2&1/2&1/2&1/2
	\end{pmatrix}\]

which can be represented by a probability distribution over the following discrete assignments. 
\[\frac{1}{2}\begin{pmatrix}
	1&0&0&1\\
   0&1&1&0
	\end{pmatrix} + \frac{1}{2}\begin{pmatrix}
	0&1&1&0\\
   1&0&0&1
	\end{pmatrix}. \]
	
	% The random assignment is $\sd$ dominated by 
	% \[\begin{pmatrix}
	% 	1&0&1&0\\
	%    0&1&0&1
	% 	\end{pmatrix}.
	% \]
	
	It can be shown that both discrete assignments are not $\sd$-efficient.
\qed
\end{proof}
\begin{corollary}
\PS is not $\sd$-efficient.
\end{corollary}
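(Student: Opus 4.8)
The plan is to reuse the two-agent, four-object profile from the preceding proposition, on which $MPS$ returns the uniform assignment
\[
p=\begin{pmatrix}1/2&1/2&1/2&1/2\\ 1/2&1/2&1/2&1/2\end{pmatrix},
\]
and to exhibit a single feasible assignment that $\sd$-dominates $p$ for \emph{both} agents; by the definition of $\sd$-efficiency this shows $p$, and therefore $MPS$, is not $\sd$-efficient.

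The assignment I would use is the discrete one $q$ that gives agent $1$ the bundle $\{o_1,o_3\}$ and agent $2$ the bundle $\{o_2,o_4\}$, i.e.
\[
q=\begin{pmatrix}1&0&1&0\\ 0&1&0&1\end{pmatrix}.
\]
It is feasible: each column sums to $1$, each row sums to $c=2$, and all entries lie in $\{0,1\}$. To verify $q(1)\succ_1^{\sd}p(1)$ I would compare cumulative probabilities along $o_1\succ_1 o_2\succ_1 o_3\succ_1 o_4$: for $q(1)$ these are $1,1,2,2$ and for $p(1)$ they are $1/2,1,3/2,2$, so the former weakly dominates everywhere and is strictly larger on the $\{o_1\}$ and $\{o_1,o_2,o_3\}$ prefixes. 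The check for agent $2$ along $o_2\succ_2 o_1\succ_2 o_4\succ_2 o_3$ is identical: $q(2)$ gives $1,1,2,2$ while $p(2)$ gives $1/2,1,3/2,2$. Hence $q(i)\succ_i^{\sd}p(i)$ for $i\in\{1,2\}$, so $p$ is not $\sd$-efficient.

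Equivalently, the conclusion can be read off the preceding proposition directly: the two discrete assignments in the stated decomposition of $p$ are each $\sd$-dominated for both agents by the same $q$ above, via a single beneficial exchange — swapping $o_3$ and $o_4$ in the first, and $o_1$ and $o_2$ in the second — and since $\sd$-domination is preserved under convex combinations, $p$ is $\sd$-dominated by $\frac12 q+\frac12 q=q$.

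I do not expect a real obstacle here; the one pitfall to avoid is trying to deduce the corollary from a failure of ex post efficiency, because on this profile $p$ actually \emph{is} ex post efficient: $p=\frac12 D+\frac12 D'$, where $D$ gives $\{o_1,o_2\}$ to agent $1$ and $\{o_3,o_4\}$ to agent $2$ and $D'$ gives $\{o_3,o_4\}$ to agent $1$ and $\{o_1,o_2\}$ to agent $2$, both of which are $\sd$-efficient (each hands one agent its two most preferred objects). So the argument must go through an explicit $\sd$-dominating assignment, as above, rather than through the ex post notion.
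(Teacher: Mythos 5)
Your proof is correct, and it takes a somewhat different, more self-contained route than the paper's. The paper derives the corollary from the preceding proposition together with the general closure principle that an $\sd$-efficient random assignment cannot be written as a convex combination (with positive weights) of discrete assignments at least one of which is $\sd$-dominated: replacing a dominated component by a dominating one yields a convex combination that $\sd$-dominates the original. You instead exhibit the dominating feasible assignment $q$ (agent $1$ receives $\{o_1,o_3\}$, agent $2$ receives $\{o_2,o_4\}$) and verify the cumulative sums directly, which avoids appealing to that closure property altogether; your second paragraph then recovers the paper's argument as a special case, since both components of the paper's decomposition are dominated by this same $q$. The direct route buys transparency (one sees the explicit witness of inefficiency), while the paper's route makes the logical dependence on the preceding proposition explicit. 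Your cautionary remark is also correct and valuable: on this profile the $MPS$ outcome \emph{is} ex post efficient, since it equals one half of the assignment giving agent $1$ the bundle $\{o_1,o_2\}$ plus one half of the assignment giving agent $2$ that bundle, and both of those discrete assignments are $\sd$-efficient; so the $\sd$-inefficiency here genuinely cannot be inferred from a failure of ex post efficiency, which is consistent with the paper's use of a different profile for its subsequent ex post efficiency result.
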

\begin{proof}
	An $\sd$-efficient assignment cannot be represented as a convex combination of discrete assignment in which at least one of the assignments is not $\sd$-efficient. If this were the case, then the random assignment is not $\sd$-efficient.
	\qed
\end{proof}

Although the lack of $\sd$-efficiency of \ps  was commented on in the original paper of \citet{ChKo10a}, we show that \ps is surprisingly not even ex post efficient.

% \begin{proposition}
% \PS is not ex post efficient.
% \end{proposition}
% \begin{proof}
% 	\begin{align*}
% 1:&\quad o_1,o_2,o_3,o_4 \\
% 2:&\quad o_3,o_2,o_4,o_1 
% \end{align*}
% 
% 
% The only Pareto optimal assignments are
% 
% \[\begin{pmatrix}
% 	1&1&0&0\\
%    0&0&1&1
% 	\end{pmatrix}\]
% 
% and 
% 
% \[\begin{pmatrix}
% 	1&0&0&1\\
%    0&1&1&0
% 	\end{pmatrix}.\]
% 	
% 	
% 	The outcome of \ps is 
% 	
% 	\[p=\begin{pmatrix}
% 	1&1/2&1/4&1/4\\
%    0&1/2&3/4&3/4
% 	\end{pmatrix}.\]
% 
% 
% Now if random assignment $p$ is ex post efficient, then 
% 
% 
% \[\begin{pmatrix}
% 	1&1/2&1/4&1/4\\
%    0&1/2&3/4&3/4
% 	\end{pmatrix}= \lambda \begin{pmatrix}
% 	1&1&0&0\\
%    0&0&1&1
% 	\end{pmatrix}   + (1-\lambda)\begin{pmatrix}
% 	1&0&0&1\\
%    0&1&1&0
% 	\end{pmatrix}\]
% 	
% 	Then $(1-\lambda)=1/4$ which means $\lambda=3/4$.  
% 	But $1/2=\lambda$ which is a contradiction. 
% 
% 
% \end{proof}

\begin{proposition}
\PS is not ex post efficient even if we allow convex combinations of all deterministic assignments including unbalanced deterministic assignments. 
\end{proposition}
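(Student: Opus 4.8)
The plan is to exhibit a concrete preference profile on which $MPS$ returns a random assignment $M$ that lies outside the convex hull of the $\sd$-efficient deterministic assignments, and to keep this conclusion valid even when the deterministic assignments used are allowed to be unbalanced (different agents receiving different numbers of objects). It suffices to prove the following: every way of writing $M$ as a convex combination of deterministic assignments must place positive weight on one that admits a strictly Pareto-improving reallocation, since such an assignment is not $\sd$-efficient and hence cannot occur with positive probability in a decomposition witnessing ex post efficiency.

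The tool I would use for that step is a Bonferroni (inclusion--exclusion) argument. Suppose one can find distinct objects $a_1,\dots,a_k$ and distinct agents $i_1,\dots,i_k$ such that (i) cyclically reassigning $a_{\ell+1}$ to agent $i_\ell$ (indices modulo $k$) is a strict Pareto improvement over the deterministic assignment in which each $a_\ell$ is held by $i_\ell$ — in the simplest case of each agent exchanging a single object this is exactly $a_{\ell+1}\succ_{i_\ell}a_\ell$ for all $\ell$ — and (ii) $\sum_{\ell=1}^{k}M(i_\ell)(a_\ell)>k-1$. Then for any decomposition $M=\sum_t\lambda_t D_t$ into deterministic assignments, letting $A_\ell$ be the set of indices $t$ with $D_t(i_\ell)(a_\ell)=1$, feasibility gives $\sum_{t\in A_\ell}\lambda_t=M(i_\ell)(a_\ell)$, and inclusion--exclusion yields $\sum_{t\in\bigcap_\ell A_\ell}\lambda_t\ \ge\ \sum_\ell M(i_\ell)(a_\ell)-(k-1)>0$. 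Hence some $D_t$ with positive weight assigns $a_\ell$ to $i_\ell$ for every $\ell$ and therefore admits the strictly Pareto-improving cyclic reallocation, so it is not $\sd$-efficient. This bound never mentions how many objects any agent receives, so it rules out decompositions using unbalanced deterministic assignments too; thus $M$ is not ex post efficient in the strong sense asserted.

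Concretely I would take a profile with at least three agents in which several agents compete for the same top-ranked objects while another agent is progressively pushed onto less preferred objects, run the eating procedure breakpoint by breakpoint to obtain $M$ in closed form, and then read off a cyclic family $(i_\ell,a_\ell)$ satisfying (i) and (ii). If the clean inequality (ii) cannot be reached for the chosen profile, the fallback is to treat the existence of a decomposition as a linear feasibility problem whose unknowns are the weights on the finitely many $\sd$-efficient deterministic assignments, enumerate those assignments, and exhibit an explicit infeasibility certificate — coefficients $\alpha=(\alpha_{i,o})$ with $\sum_{i,o}\alpha_{i,o}M(i)(o)>\sum_{i,o}\alpha_{i,o}D(i)(o)$ for every $\sd$-efficient deterministic $D$.

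The main difficulty is the construction itself. Highly symmetric cyclic profiles make $MPS$ split each contested object evenly, which lands the sum in (ii) exactly on the boundary value $k-1$, and such outcomes are in fact ex post efficient (for two agents one can even argue that $M(i)(o)+M(j)(o')\le 1$ for every Pareto-improving two-object swap). So the witnessing profile must be asymmetric enough that $MPS$ over-concentrates a cyclic conflict, or arranged so that two distinct Pareto-improving reallocations cannot be simultaneously avoided. Once a suitable profile is found, verifying the breakpoint-by-breakpoint execution of $MPS$ and checking $\sd$-(in)efficiency of the finitely many relevant deterministic assignments is routine.
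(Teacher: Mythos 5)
There is a genuine gap: your submission is a strategy, not a proof. The proposition is an existential claim about a specific rule, so its proof \emph{is} the witness --- a concrete profile, the breakpoint-by-breakpoint computation of the $MPS$ outcome, and the verification that this outcome lies outside the relevant convex hull. You supply none of these; you explicitly defer ``the construction itself'' and call it the main difficulty. The paper does exactly this concrete work: it takes $n=2$, $c=2$ with $\pref_1: o_1,o_2,o_3,o_4$ and $\pref_2: o_3,o_2,o_4,o_1$, computes $p=\bigl(\begin{smallmatrix}7/8&1/2&1/4&3/8\\1/8&1/2&3/4&5/8\end{smallmatrix}\bigr)$, enumerates the $\sd$-efficient discrete (including unbalanced) assignments, and exploits the structural fact that the only efficient discrete assignment in which agent $2$ receives $o_1$ is the one giving agent $2$ \emph{every} object, which forces constraints on any decomposition of $p$.

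Your primary tool is also the wrong one for this job. The inclusion--exclusion bound $\sum_\ell M(i_\ell)(a_\ell)>k-1$ is correct as stated, but on two-agent instances of this size it provably cannot fire: for the paper's profile every Pareto-improving pair $(i,a),(j,a')$ has $M(i)(a)+M(j)(a')\le 3/4<1$, and with only two agents you cannot lengthen the cycle over distinct agents. You half-recognize this when you observe that symmetric profiles land exactly on the boundary, but the conclusion you should draw is that the generic Bonferroni route is a dead end here and that your ``fallback'' --- enumerating the efficient deterministic assignments and producing an explicit infeasibility certificate --- is not a fallback but the actual proof, and it is precisely the part you have not carried out. Note also that admitting unbalanced assignments makes this enumeration genuinely delicate: assignments in which one agent receives only a single top-ranked object and the other receives everything else are Pareto-undominated, so the set of usable vertices is larger than the balanced intuition suggests, and any decomposition argument must account for all of them. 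Until you name a profile and complete that verification, nothing has been proved.
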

% \begin{proof}[Proof Sketch]
% 	Consider two agents having the following preferences.
% 		\begin{align*}
% 1:&\quad o_1,o_2,o_3,o_4 \\
% 2:&\quad o_3,o_2,o_4,o_1 
% \end{align*}
% 
% 
% The only Pareto optimal feasible assignments are
% 
% \[\begin{pmatrix}
% 	1&1&0&0\\
%    0&0&1&1
% 	\end{pmatrix} \text{ and }
% \begin{pmatrix}
% 	1&0&0&1\\
%    0&1&1&0
% 	\end{pmatrix}.\]
% 	
% 		%If we also allow Pareto optimal assignment in which agents may not get two object then the discrete assignments in which agent $1$ gets $\{a\}$ or $\{a,b,c\}$ or $\{a,b,d\}$ is also Pareto optimal.  Notice that in each Pareto optimal assignment agent $1$ gets $a$ completely. 
% 	%
% 	
% 	The outcome of \ps is 
% 	
% 	\[p=\begin{pmatrix}
% 	7/8&4/8&2/8&3/8\\
%    1/8&4/8&6/8&5/8
% 	\end{pmatrix}.\]
% 	
% 	
% 
% 
% 
% Now if random assignment $p$ is ex post efficient, then it can be expressed as a convex combination of Pareto optimal feasible assignments. Hence, 
%  \begin{align*}
% p=& \lambda \begin{pmatrix}
% 	1&1&0&0\\
%    0&0&1&1
% 	\end{pmatrix}   + (1-\lambda)\begin{pmatrix}
% 	1&0&0&1\\
%    0&1&1&0
% 	\end{pmatrix}.
% \end{align*}
% 	
% 	%Then $(1-\lambda)=3/8$ which means $\lambda=5/8$.  
% 	%But $1=$ which is a contradiction. 
% 
% 
% Since agent $1$ gets $o_1$ completely in each Pareto optimal assignment, $p_{1o_1}$ should be one and not $7/8$. 
% %$7/8=1$ which is a contradiction.
% \end{proof}

\begin{proof}
	Consider two agents having the following preferences.
		\begin{align*}
1:&\quad o_1,o_2,o_3,o_4 \\
2:&\quad o_3,o_2,o_4,o_1 
\end{align*}

A discrete assignment is not $\sd$-efficient if agent $1$ gets $o_3$ or $o_4$ and agent $2$ gets $o_1$. 
The only $\sd$-efficient discrete assignments are
$\begin{pmatrix}
	1&1&0&0\\
   0&0&1&1
	\end{pmatrix} \text{,}
\begin{pmatrix}
	1&0&0&1\\
   0&1&1&0
	\end{pmatrix}, 
	\begin{pmatrix}
		1&0&0&0\\
	   0&1&1&1
		\end{pmatrix},
		\begin{pmatrix}
			1&1&1&0\\
		   0&0&0&1
			\end{pmatrix}
	$, $			\begin{pmatrix}
					1&1&1&1\\
				   0&0&0&0
					\end{pmatrix} \text{ and}
					\begin{pmatrix}
						0&0&0&0\\
					   1&1&1&1
						\end{pmatrix}.$ We note that the outcome of \ps is 
		$p=\begin{pmatrix}
	7/8&4/8&2/8&3/8\\
   1/8&4/8&6/8&5/8
	\end{pmatrix}.$
		%If we also allow Pareto optimal assignment in which agents may not get two object then the discrete assignments in which agent $1$ gets $\{a\}$ or $\{a,b,c\}$ or $\{a,b,d\}$ is also Pareto optimal.  Notice that in each Pareto optimal assignment agent $1$ gets $a$ completely. 
	%
	% Note that agent $1$ gets $o_1$ completely in each Pareto optimal assignment. 
Now if random assignment $p$ is ex post efficient, then it can be expressed as a convex combination of $\sd$-efficient feasible discrete assignments. Since $p(2)(o_1)>0$, this is only possible if $	\begin{pmatrix}
		0&0&0&0\\
	   1&1&1&1
		\end{pmatrix}$ is used in the convex combination. But since agent $2$ does not get $o_1$ in any other discrete permutation, this means that if any convex combination of $\sd$-efficient discrete assignments is used to obtain $p$, then in each discrete $\sd$-efficient assignment used the following three cases can occur: $(i)$ $2$ gets both $o_2$ and $o_1$; $(ii)$ $2$ gets neither $o_2$ nor $o_1$ and $(iii)$ $2$ gets $o_2$ but not $o_1$. Hence, 
it must be that $p(2)(o_2)\geq p(2)(o_1)$. But this is a contradiction.
	\qed
\end{proof}

% \textbf{page 15, the proof of Proposition 6, among six Pareto optimal discrete assignments, consider the last four assignments, at which an agent receives more than two objects. Are they feasible in this model? The violation of ex post efficiency should be shown within the range of feasible discrete allotments. I don’t think it is meaningful otherwise.}

\section{Conclusions}

\begin{table*}[h!]
%	\small
	\centering
		\scalebox{1}{
\centering
%\scriptsize 

\begin{tabular}{lccccc}
\toprule
%&&&Random&One-at-a-time&Multi-unit-eating\\ 
&Uniform&Priority&$RP$&$OPS$&$MPS$\\ \midrule
$\sd$-efficiency&-&+&-&+&-\\ 
%discrete $\sd$-efficiency&-&+&-&+&-\\ 
ex post efficient&-&+&+&+&-\\ 
%non-wastefulness&-&+&+&+&+\\ 
%discrete assignment is PO&-&+&+&+&+\\ 
unanimity&-&+&+&+&+\\ 
\midrule
$\sd$ envy-freeness&+&-&-&+&+\\ 
weak $\sd$ envy-freeness&+&-&+&+&+\\ 
anonymous&+&-&+&+&+\\
neutrality&+&+&+&+&+\\
\midrule
$\sd$-SP&+&+&+&-&-\\
%SD-SP for 2 agents&+&+&+&-&+\\
$DL$-SP&+&+&+&-&+\\
weak $\sd$-SP&+&+&+&-&+\\%  
\midrule
 polynomial-time &+&+&-&+&+\\
% consistency&+&+&-&+&+\\
\bottomrule
\end{tabular}
}
\caption{Assignment rules for allocating multiple objects to agents with strict preferences. Most of the properties of rules other than $MPS$ are stated in \cite{Koji09a}.
}
\label{table:summarymulti}
\end{table*}

In this paper, we showed a general impossibility result concerning randomized assignment with multi-unit demands. % The statement applies to many resource allocation settings.
Another impossibility result requiring weak $\sd$-group-strategyproofness applies to
randomized assignment without multi-unit demands.  As a corollary of the second impossibility, we also obtain the corresponding impossibility in the domain of randomized voting.

We then presented a definition of \ps. \PS has previously only been defined inaccurately in the literature. We showed that whereas \ps satisfies some compelling fairness and strategic properties, it does not satisfy reasonable efficiency requirements. We note that the positive results of \PS even hold if $m$ is not a multiple of $n$. In this case, agents eat a maximum of $\ceil{m/n}$ houses at any time.

Our findings concerning \ps are summarized in Table~\ref{table:summarymulti} which also provides a comparison with other random assignment rules. In view of the impossibility result (Theorem~\ref{th:impossible}), it is not possible to achieve the desirable properties of $PS$ and \ps simultaneously.  It is easy to see that the choice of an assignment rule depends on which properties are prioritized. Our paper helps clarify the relative merits of various randomized assignments rules. 
It is an open problem whether ex post efficiency, weak $\sd$-strategyproofness and $\sd$ envy-freeness are compatible in the multi-unit case.
 %(see Figure~\ref{fig:ops-mps}).
We leave a characterization of \ps for future work.  
%Another interesting problem is the computational complexity of manipulation under \ps.

\section*{Acknowledgments}

This material is based upon work supported by
the Australian Government's
Department of Broadband, Communications and the Digital
Economy, the Australian Research Council, the Asian
Office of Aerospace Research and Development through
grant AOARD-124056.

 %\bibliography{../../pamas/abb,../../pamas/pamas,../../pamas/brandt,../../pamas/aziz}

\end{document}